\newif\ifcomments
\newcommand{\email}[1]{\href{mailto:#1}{\texttt{#1}}}
\newcommand{\TitleCase}[1]{\expandafter\TitleCase@i#1\@nil}
\def\TitleCase@i#1#2\@nil{\MakeUppercase{#1}#2}
\newcommand{\DeclarePluralForm}[2]{\expandafter\gdef\csname plural@#1\endcsname{#2}}
\newcommand{\PluralForm}[1]{%
  \ifcsname plural@#1\endcsname
    \csname plural@#1\endcsname
  \else
    #1s%
  \fi
}
\NewDocumentCommand{\DeclareTheoremWithAlias}{m o}{%
  \newaliascnt{#1}{theorem}%
  \newtheorem{#1}[#1]{\TitleCase{#1}}%
  \aliascntresetthe{#1}%
  \IfNoValueTF{#2}{%
    \edef\dtwa@plural{\PluralForm{#1}}%
  }{%
    \edef\dtwa@plural{#2}%
  }%
  \edef\dtwa@titleplural{\TitleCase{\dtwa@plural}}%
  \expandafter\crefname\expandafter{#1}{#1}{\dtwa@plural}%
  \expandafter\Crefname\expandafter{#1}{\TitleCase{#1}}{\dtwa@titleplural}%
}
\newtheorem{theorem}{Theorem}[section]
\setlist[description]{noitemsep}
\setlist[enumerate]{noitemsep}
\setlist[itemize]{noitemsep}
\NewDocumentCommand{\whiten}{ m }
{
  \int_step_function:nnnN {1}{1}{#1} \white_text:n
}
\NewDocumentCommand{ \varul }{ D<>{5} O{0.2ex} O{0.1ex} +m } {%
  \begingroup
  \setul{#2}{#3}%
  \def\SOUL@uleverysyllable{%
    \setbox0=\hbox{\the\SOUL@syllable}%
    \ifdim\dp0>\z@
    \SOUL@ulunderline{\phantom{\the\SOUL@syllable}}%
    \whiten{#1}%
    \llap{%
      \the\SOUL@syllable
      \SOUL@setkern\SOUL@charkern
    }%
    \else
    \SOUL@ulunderline{%
      \the\SOUL@syllable
      \SOUL@setkern\SOUL@charkern
    }%
    \fi}%
  \ul{#4}%
  \endgroup
}
\newcommand{\E}{\mathop{\mathbb{E}}}
\newcommand{\mparen}[1]{\mleft(#1\mright)}
\newcommand{\norm}[1]{\left\lVert#1\right\rVert}
\newcommand{\Tr}{\mathrm{Tr}}
\newcommand{\ketbra}[2]{\ket{#1}\!\bra{#2}}
\newcommand{\proj}[1]{\ketbra{#1}{#1}}
\newcommand{\Init}{\mathsf{Init}}
  \newcommand{\luowen}[1]{{\color{blue}Luowen: #1}}
  \newcommand{\luowen}[1]{}
\begin{document}

\title{Impersonating Quantum Secrets over Classical Channels}
\author{Luowen Qian\thanks{NTT Research, Inc. \email{luowen.qian@ntt-research.com}} \and Mark Zhandry\thanks{Stanford University \& NTT Research, Inc. \email{mzhandry@stanford.edu}}}
\date{}
\maketitle

\begin{abstract}
We show that a simple eavesdropper listening in on classical communication between potentially entangled quantum parties will eventually be able to impersonate any of the parties.
Furthermore, the attack is efficient if one-way puzzles do not exist.
As a direct consequence, one-way puzzles are implied by reusable authentication schemes over classical channels with quantum pre-shared secrets that are potentially evolving.

As an additional application, we show that any quantum money scheme that can be verified through only classical queries to \emph{any} oracle cannot be information-theoretically secure.
This significantly generalizes the prior work by Ananth, Hu, and Yuen (ASIACRYPT'23) where they showed the same but only for the specific case of random oracles.
Therefore, verifying black-box constructions of quantum money inherently requires coherently evaluating the underlying cryptographic tools, which may be difficult for near-term quantum devices.
\end{abstract}

\section{Introduction}

In this work, we consider the following setting: Alice and Bob start off in possession of polynomial-sized secret, potentially-entangled quantum states. They then engage in a perpetual interactive protocol over a classical public channel. An eavesdropper Eve listens in on all of their communication. Eve knows the protocol, but does not (initially) know the shared quantum state of Alice and Bob. Eventually, Eve will try to impersonate (say) Alice to Bob. In fact, Eve will actually attempt to produce a quantum state which can be swapped for Alice's state. Alice and Bob then interact for a few more rounds, just with Alice's state replaced with Eve's state; Eve's goal is to get Bob to fail to notice that Alice's state was swapped, which we call impersonating Alice.
Bob's goal is to detect the swap.

This impersonation setting is natural, and as we will see encompasses a range of interesting applications. Our main result is that such a protocol cannot be information-theoretically secure. That is, given enough rounds of interaction in the protocol, a computationally unbounded Eve will eventually be able to impersonate Alice with all but inverse-polynomial probability. We can even say something stronger, namely, that any protocol of this form secure against \emph{efficient} quantum adversaries inherently requires \emph{cryptographic} assumptions. Slightly more formally:

\begin{theorem}[Informal]\label{thm:maininf} There exists a (potentially inefficient) Eve which will be able to impersonate Alice after a sufficiently large polynomial number of interactions. Moreover, if one-way puzzles do \emph{not} exist, then Eve runs in quantum polynomial time.
\end{theorem}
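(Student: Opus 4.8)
The plan is to exploit that classical communication cannot refresh quantum or correlational resources: throughout the interaction Alice and Bob are confined to local operations and classical communication (LOCC) on their joint state, so a quantity that measures ``how correlated Alice's register still is with Bob's'' and is monotone under LOCC can fall only boundedly often. Concretely, I would take the potential $\Phi_t := I(A : B \mid T_t)$, the transcript-conditioned mutual information between Alice's and Bob's registers after $t$ rounds (the conditional entropy $S(A \mid T_t)$ behaves the same way). Local unitaries, appended local ancillas, and fresh local randomness leave $\Phi_t$ unchanged; and because Alice's messages are functions of her register and Bob's of his, sending the round-$r$ message $M$ drops $\Phi_t$ by exactly $I(M : (\text{other party's register}) \mid T_{<r})$, which is nonnegative. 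Hence $\Phi_t$ is non-increasing and, being between $0$ and $\mathrm{poly}(\lambda)$, its total decrease over the whole run is $\mathrm{poly}(\lambda)$; so after a sufficiently large polynomial number of rounds there is a window of the ``few more rounds'' length over which $\Phi$ decreases by at most a prescribed inverse polynomial.

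The analytic heart is a ``flat windows carry no verification power'' lemma: over a window on which $\Phi$ barely changes, Bob's view --- in particular his accept/reject bit --- is essentially unchanged if Alice's register is replaced by \emph{any} state with the same transcript-conditioned marginal $\rho_A^{T}$. This I would prove by a hybrid over the rounds of the window: when the round-$r$ message is sent, its correlation with the other party's register is exactly the round's drop in $\Phi$, so on a nearly-flat window that message --- hence everything Bob sees --- is nearly the same whether the register producing it is Alice's true register (correlated with Bob's) or Eve's surrogate (having the right marginal but uncorrelated with Bob's); one also has to note that ``hiding a test now and checking it later'' is unavailable, as it would require the information-theoretically impossible combination of perfectly hiding and binding commitments. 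Granting the lemma, the inefficient Eve of the theorem is immediate: on reaching a flat window she computes the fixed state $\rho_A^{T}$ from the transcript by brute force, swaps it into Alice's register, and thereafter runs Alice's honest code; since all but a $1/\mathrm{poly}$ fraction of late rounds lie in a flat window (and see no recomputation of Eve's guess, below), a typical late round succeeds even though Eve cannot herself certify which rounds are good.

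For the moreover clause I would package the reconstruction step as a one-way puzzle. The sampler runs $\Init$ and the protocol honestly for a uniformly random polynomial number of rounds, outputting the transcript $\tau$ as the puzzle together with a succinct classical key --- say a polynomial-size description of the Kraus sequence post-selecting the state conditioned on $\tau$ --- from which one reconstructs $\rho_A^\tau$; the (unbounded) verifier performs this reconstruction and compares it against $\rho_A^\tau$, which it can compute from $\tau$ and the public protocol. Honest keys verify by correctness of the protocol, and the puzzle distribution is exactly the distribution of transcripts Eve observes, so if one-way puzzles do not exist there is a QPT algorithm that on input Eve's observed transcript outputs a valid key with noticeable probability; Eve runs it (averaging a few independent invocations to recover the mixed marginal rather than a single posterior sample), prepares the resulting state, and proceeds as before, now efficiently.

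The step I expect to dominate the work is the flat-windows lemma: quantifying exactly how the drop in $\Phi$ bounds the leakage about Alice's residual state, handling the full generality of LOCC rounds and of correlations with Bob's private memory rather than only with the public transcript, and excluding gadgets that appear to test Alice's state without paying $\Phi$. A secondary but real subtlety is the one-way-puzzle reduction --- keeping the puzzle distribution exactly equal to Eve's view, allowing an unbounded verifier (so that post-selecting a low-probability transcript is permitted), and ensuring that a valid key translates into a state Eve can prepare efficiently and with the correct \emph{mixed} marginal, since that is what the flat-windows lemma needs.
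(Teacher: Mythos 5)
The information-theoretic core of your attack is essentially the paper's argument in different clothing. The paper purifies the shared state (handing the purification to Bob without loss of generality) and tracks $H(X\mid \text{transcript})$, showing it never increases and that its per-message decrement equals $I(Y;Q_i\mid Q_{<i}A_{<i})$, with the symmetric bound $\sum_i I(X;A_i\mid Q_{\le i}A_{<i})\le n$ covering Bob's replies; in that purified picture your potential $I(A{:}B\mid T)$ is exactly twice the paper's, and your ``drop equals $I(M:\text{other party's register}\mid T)$'' identity is the same chain-rule fact, with the single potential conveniently charging both directions at once. The Markov-style selection of a low-drop window, the Pinsker-plus-hybrid swap of Alice's register for an independently prepared copy of the transcript-conditioned marginal, and the random choice of the impersonation round (because Eve cannot herself detect which windows are flat) all match the paper's proof of \Cref{lemma:tracedistance}, including the need to refresh/re-condition Bob's side inside the hybrid, which you correctly flag as the delicate part. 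So the first claim of the theorem is in good shape.

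The genuine gap is in the ``moreover'' clause. A one-way puzzle has a \emph{classical} key, and your reduction needs a key that simultaneously (i) the honest QPT sampler can output alongside the transcript $\tau$ and (ii) enables a QPT Eve to actually \emph{prepare} the conditional state $\rho_A^{\tau}$. Your proposed key---a description of the Kraus/post-selection sequence conditioned on $\tau$---is computable from $\tau$ and the public protocol, so the puzzle is trivially invertible, but possessing such a key is of no help to an efficient Eve: synthesizing a state from a post-selection description is precisely the hard step you are trying to discharge. If instead you strengthen the verifier to accept only keys encoding an efficient preparation circuit for $\rho_A^{\tau}$, the honest sampler can no longer produce a valid key (indeed such a succinct circuit need not exist), so completeness fails. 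In short, ``one-way puzzles do not exist'' hands you inverters for classical search problems, whereas Eve must solve a classical$\rightarrow$quantum state-synthesis problem; bridging these is exactly the Khurana--Tomer equivalence between one-way puzzles and hard classical$\rightarrow$quantum extrapolation (weak state puzzles), which the paper invokes as \Cref{import-thm:owpuzzle-extrapolation} and which is the sole ingredient making Eve efficient. Your sketch must either cite that theorem or reprove it; the direct packaging as written does not go through. (Minor: the aside about hiding/binding commitments is a red herring---the ``test now, check later'' worry is already handled by the hybrid argument using the $I(X;A_i\mid\cdot)$ terms.)
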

One-way puzzles~\cite{STOC:KhuTom24} are a foundational quantum cryptographic primitive. They are considered fairly mild: for example, one-way puzzles are implied by one-way functions. On the other hand, they are believed not to be the mildest assumption, as concepts like quantum commitments are believed to be even milder. One-way puzzles are efficiently broken with query access to $\mathsf{PP} \subseteq \mathsf{PSPACE}$.

We now explain two seemingly unrelated applications of \Cref{thm:maininf}. We first consider the task of repeated authentication over classical channels, where Alice and Bob's quantum key can be an arbitrary polynomial-sized entangled quantum state. The number of authentications is unbounded. An immediate consequence of Theorem~\ref{thm:maininf} is:
\begin{corollary}[informal]\label{cor:authinf} If one-way puzzles do not exist, then neither does quantum authentication over classical channels. In particular, quantum authentication over classical channels cannot be information-theoretically secure.
\end{corollary}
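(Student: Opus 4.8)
The plan is to derive \Cref{cor:authinf} as essentially a direct specialization of \Cref{thm:maininf}. First I would fix a clean definition of quantum authentication over classical channels: Alice and Bob hold (possibly entangled) polynomial-size quantum registers constituting a shared key; they repeatedly run an authentication subprotocol over a classical channel, after each of which both parties may update their registers (so the key is allowed to be evolving, as in the abstract); correctness says honest executions make Bob accept; and security says that an eavesdropper Eve who observes every classical transcript cannot subsequently get Bob to accept an authentication that did not come from the real Alice, except with negligible probability.

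Next I would observe that this is literally an instance of the setting of \Cref{thm:maininf}: the ``perpetual interactive protocol over a classical public channel'' is the concatenation of the authentication rounds, Alice's and Bob's states are their key registers together with any auxiliary workspace, and Eve is exactly the eavesdropper. Applying \Cref{thm:maininf}, after some fixed polynomial number of authentication rounds there is an Eve that produces a quantum state which can be swapped in for Alice's so that, over the next few rounds, Bob fails to detect the swap with all but inverse-polynomial probability. In particular, Eve can then carry out the honest Alice protocol from this impersonating state, and Bob accepts with all but inverse-polynomial probability. Since the number of authentications is unbounded, Eve may eavesdrop for as long as \Cref{thm:maininf} requires before mounting the attack, so this constitutes a legitimate forgery and breaks authentication security (inverse-polynomial success far exceeds the negligible threshold). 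The ``moreover'' clause of \Cref{thm:maininf} makes Eve efficient whenever one-way puzzles do not exist, yielding the first sentence of the corollary; dropping the efficiency requirement and invoking the unconditional part of \Cref{thm:maininf} shows no such scheme can be information-theoretically secure.

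The only place where the argument is more than bookkeeping is the bridge between the conclusion of \Cref{thm:maininf}, phrased as ``Eve holds a quantum state indistinguishable to Bob from Alice's'', and the forgery event in the authentication security game. One must check that ``Bob does not notice the swap over the next few rounds'' implies ``Bob accepts an authentication not produced by the real Alice'': this follows because, once Eve holds such a state, she simply simulates Alice honestly, and Bob's acceptance predicate --- being statistically close to unchanged under the swap --- fires. A secondary point is to confirm that the authentication model permits both parties' states to evolve across rounds (which the setting of \Cref{thm:maininf} already allows) and that a single impersonation with all-but-inverse-polynomial probability suffices, which it does since any reasonable definition of reusable authentication demands a negligible forgery bound across all polynomially many rounds. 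I expect this definitional matching to be the main (and essentially the only) obstacle.
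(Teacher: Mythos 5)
Your proposal is correct and follows essentially the same route as the paper: the paper also derives the corollary by viewing the repeated authentication protocol as an instance of the impersonation game and invoking \Cref{thm:main} (the formal version of \Cref{thm:maininf}), with the non-existence of one-way puzzles giving an efficient Eve. The one bridging issue you flag --- that the theorem only guarantees closeness of classical \emph{transcripts}, while Bob's accept decision could a priori depend on his private quantum state --- is resolved in the paper exactly as its remark after \Cref{thm:main} suggests: Bob is made to send his accept/reject bit as an additional classical message at the end of each authentication, so transcript closeness directly bounds the gap in acceptance probability.
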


For our second application, we turn to public-key quantum money. Using Theorem~\ref{thm:maininf}, we show that the verifier in public-key quantum money must inherently make quantum queries to the underlying cryptographic building blocks:
\begin{corollary}[informal]\label{cor:qminf} There is no black box construction of public key quantum money from \emph{any} cryptographic tool (or combination of tools), where the verifier only makes classical queries to the cryptographic tool(s).
\end{corollary}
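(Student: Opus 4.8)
The plan is to instantiate \Cref{thm:maininf} with the quantum banknote (together with the verifier's internal workspace) playing the role of Alice, and the idealized cryptographic tool(s)---modelled as an oracle $\mathcal{O}$ carrying an arbitrary, possibly evolving internal state (a tuple of tools is bundled into one oracle; allowing state is exactly where the full generality of \Cref{thm:maininf} is used)---playing the role of Bob. Fix such an $\mathcal{O}$ and suppose toward a contradiction that $(\mathsf{Mint}^{\mathcal{O}},\mathsf{Ver}^{\mathcal{O}})$ is an unforgeable public-key quantum money scheme in which $\mathsf{Ver}$ makes only \emph{classical} queries to $\mathcal{O}$. Take the ``perpetual protocol'' to be: run $\mathsf{Ver}^{\mathcal{O}}$ on the banknote register over and over, resetting the workspace between runs, and have Alice append each run's accept/reject bit to the transcript as a classical message. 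Alice's local operations are the coherent steps of $\mathsf{Ver}$ on (banknote, workspace); Bob's local operation is answering oracle queries; and since $\mathsf{Ver}$ queries $\mathcal{O}$ only classically, the entire Alice--Bob channel is classical, which is precisely the hypothesis of \Cref{thm:maininf}. The initial state has a freshly minted $\rho \leftarrow \mathsf{Mint}^{\mathcal{O}}$ (arbitrarily entangled with $\mathcal{O}$; taking $\mathcal{O}$ to be lazily/compressed-sampled keeps Bob's relevant register polynomial-size) on Alice's side and $\mathcal{O}$'s starting state on Bob's side.

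Now consider the forger. It is given $\rho$ and itself plays out this protocol: it runs $\mathsf{Ver}^{\mathcal{O}}$ on $\rho$ for $T$ rounds, where $T$ is chosen so that these rounds comprise at least the ``sufficiently large polynomial number of interactions'' demanded by \Cref{thm:maininf} (so $T$ is polynomial, since $\mathsf{Ver}$ runs in polynomial time), recording the full classical transcript $\tau$ of queries, answers, and accept bits. By correctness together with a non-destructiveness argument (discussed below), all $T$ runs accept with overwhelming probability and the banknote register is left in a state $\rho_T$ at negligible trace distance from $\rho$, so $\rho_T$ still verifies. Moreover $\tau$ is a legitimate honest-execution transcript of the protocol, so the (computationally unbounded) eavesdropper of \Cref{thm:maininf} applies to it and produces, on a fresh register, a state $\sigma$ that can be swapped for Alice's round-$T$ state $\approx\rho_T$ without Bob noticing; that is, $\sigma$ is a faithful stand-in, so continuing the protocol with $\sigma$ in place of $\rho_T$ leaves the rest of the execution---in particular the next accept bit, which Bob receives---essentially unchanged, and hence $\mathsf{Ver}^{\mathcal{O}}(\sigma)$ accepts with probability $1-1/\mathrm{poly}$. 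The forger outputs $(\rho_T,\sigma)$: two registers that both pass verification with probability $1-1/\mathrm{poly}$, contradicting unforgeability relative to $\mathcal{O}$. As $\mathcal{O}$ was an arbitrary instantiation of the tool(s), the relativized scheme is insecure for every instantiation, so no black-box construction with classically queried verification exists.

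I expect the main obstacle to be the non-destructiveness of repeated verification: iterating the gentle-measurement bound naively makes the accumulated disturbance blow up after only a logarithmic number of rounds, far short of the polynomially many rounds we need. The fix is to use that $\mathsf{Ver}$'s accepting outcome may be taken to be a \emph{projective} measurement (e.g.\ via its projective implementation in the sense of Marriott--Watrous, which the unbounded forger can compute), so that essentially all disturbance occurs in the first round and every subsequent round is exactly non-disturbing; the delicate point is to carry this out while keeping the oracle queries classical, so that the Alice--Bob channel in the reduction remains classical---alternatively, one may appeal to a built-in reusability guarantee of the money scheme. The remaining steps are largely bookkeeping: that working with an arbitrary tool rather than a random oracle (as in Ananth--Hu--Yuen) costs nothing is exactly the use of \Cref{thm:maininf} permitting Bob's shared state to be an arbitrary, arbitrarily evolving entangled quantum state; and that the forger may simultaneously play Alice and eavesdrop on the channel is harmless, since \Cref{thm:maininf}'s guarantee already holds for a transcript-only attacker and the forger merely generates that transcript by honest execution.
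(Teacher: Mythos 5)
Your reduction is essentially the paper's: cast repeated verification as the Alice--Bob protocol (Alice holds the banknote plus the verifier's workspace, Bob holds the oracle, the classical queries and answers travel over the public channel, and the accept bit is appended each round), apply \Cref{thm:maininf} to forge a stand-in state from the transcript, and output the honest post-verification banknote together with the forged one to contradict unforgeability. The genuine gap is in how you propose to handle repeated-verification correctness. Your primary fix --- replacing the verifier's accept measurement by a projective (Marriott--Watrous-style) implementation so that ``essentially all disturbance occurs in the first round'' --- cannot be carried out in this setting: implementing that projective measurement means running the verifier coherently, hence making superposition queries to the oracle, which destroys exactly the feature (a purely classical Alice--Bob channel) that \Cref{thm:maininf} needs, and is precisely why the paper notes that the Gentle Measurement Lemma is unavailable for classical-query verifiers. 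Worse, the statement you would then be proving --- impossibility assuming only one-time correctness --- is simply false: as the paper remarks, the verifier of \cite{STOC:AarChr12} can be modified to make only classical queries, so there is an oracle relative to which \emph{one-time-correct} public-key quantum money with a classical-query verifier exists. Hence no argument can upgrade one-time correctness to the multi-round acceptance your transcript and your state $\rho_T$ require.

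The correct move is your parenthetical fallback: reusable (many-time) correctness must be built into the definition, as in \cite{AC:AnaHuYue23} and as the paper does. With that in place the obstacle disappears, and note that you do not need (and will not in general get) $\rho_T$ negligibly close to $\rho$ in trace distance; reusable correctness directly guarantees that all accept bits in the honest transcript are $1$ and that the honest state after $k$ rounds still passes one more verification, which is all the reduction uses. Two smaller points: the bound in \Cref{thm:maininf} depends only on the number of qubits on Alice's side, so there is no need to lazily sample or compress the oracle to keep Bob's register small; and the paper's security notion even restricts the cloning adversary to classical oracle queries, which your reduction as written satisfies (it only runs the classical-query verifier and a transcript-only eavesdropper) but which the Marriott--Watrous route would additionally violate.
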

This significantly expands a similar result of \cite{AC:AnaHuYue23} who show the same, but only for very mild cryptographic tools from symmetric-key cryptography. Our result completely resolves the question of classical-query verification of public-key quantum money. 

We now discuss some motivation for studying these notions of quantum authentication and quantum money.

\subsection{Motivation}

\paragraph{Authentication and QKD.} Quantum key distribution (QKD)~\cite{BB84} allows for Alice and Bob to establish a shared secret key over a public channel. As such, it serves a similar role as classical public key encryption (PKE). Remarkably, however, despite classical PKE seemingly requiring very strong algebraic assumptions, QKD in its usual formulation requires no assumptions at all. QKD thus sparked significant interest in information-theoretically secure quantum cryptographic protocols.

That said, typically even quantum cryptography requires computational hardness, though hopefully weaker than classical hardness. For example, just like classical key exchange, QKD still assumes a classical authenticated channel.
Establishing such a channel is typically via public key infrastructure, or in the absence of (or if we are unwilling to trust) a setup, a short classical secret shared between the two parties which can then be used to authenticate the classical messages exchanged.
One way of doing this is via computationally secure MACs at the cost of only achieving everlasting security; the other way is via key recycling and information-theoretic MACs \cite{Mueller-QuadeRenner09,PortmannRenner14}, but this approach uses quantum communication and risks the adversary carrying out a denial-of-service (DoS) attack to exhaust the secret randomness.
Thus it becomes natural to investigate if there is a better information-theoretic approach to authentication, potentially by leveraging quantum resources.

Our version of quantum authentication can be seen as a natural generalization of classical authentication to the weakest but still meaningful setting, where (1) communication can be interactive, (2) Alice and Bob now have arbitrary local quantum computation at their disposal, and (3) they can leverage entangled quantum keys that update over time.
Without loss of generality, we also simplify the setting so that the same message is authenticated each time, which is non-trivial due to stateful interaction.

In this version, even with only classical communication, it is a priori unclear if this is possible information-theoretically due to the many relaxed requirements on the protocol itself compared to classical MACs.
The use of quantum keys and computation means that such authentication no longer necessarily implies one-way function, and can conceivably be based on far milder assumptions.
Our \Cref{cor:authinf} shows that any such quantum authentication still requires either quantum communication channels (such as key recycling) or computational hardness at least as strong as one-way puzzles, even if we are only concerned with attacks that are passive for all but one round and succeed with high probability.

\paragraph{Quantum Computation + Classical Communication Protocols.}
There has been a significant interest recently in hybrid classical/quantum protocols where the communication is classical, but some or all of the parties have local quantum computation. A famous example is the  cryptographic proof of quantumness of~\cite{FOCS:BCMVV18}. Another example is adaptations of Merkle puzzles to the quantum setting~\cite{BraSal08,BHKKLS19}. More recently, there have been efforts to lower-bound the hardness needed for such protocols~\cite{C:ACCFLM22,C:LLLL24,TCC:AnaGulLin25,ITCS:AnaKalYue25}.
The most related work is \cite{EC:QiaRaiZha25} where it is shown that QKD (or key exchange) requires either one-way puzzles or quantum communications; our first application extends this to the authentication setting as well.

\paragraph{Barriers for Quantum Money.} Quantum money is the most fundamental primitive in the field of uncloneable cryptography, where the uncloneability of quantum mechanics is leveraged for cryptographic purposes. In order to be useful, it has long been recognized that quantum money should admit public verification, where anyone can verify banknotes, while only the mint can create them. A number of constructions of such ``public key'' quantum money exist.

A fundamental question is to understand the types of cryptographic assumptions needed to realize public key quantum money. Unfortunately, public-key quantum money is only known to be constructed from strong and/or untested computational assumptions. 

A recent work by Ananth, Hu, and Yuen~\cite{AC:AnaHuYue23} initiates the study of lower-bounds on quantum money, attempting to justify this state-of-affairs. Very roughly, they show the following:
\begin{theorem}[{\cite[Theorem 1]{AC:AnaHuYue23}}, informal]\label{thm:AHYinf}
	There is no ``black box'' construction of public key quantum money from collision-resistant hash functions, where the verifier only makes classical queries to the hash function.
\end{theorem}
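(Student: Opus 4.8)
The plan is to deduce \Cref{thm:AHYinf} — and in fact the more general \Cref{cor:qminf} — from the impersonation theorem \Cref{thm:maininf}, by showing that any candidate black-box scheme with a classical-query verifier gives rise to an impersonation protocol whose security would follow from that of the quantum money scheme. Fix a candidate public key quantum money scheme $(\mathsf{Gen}^H, \mathsf{Ver}^H)$ relative to an oracle $H$ in which $\mathsf{Ver}$ issues only \emph{classical} queries; for \Cref{thm:AHYinf}, take $H$ to be a random (compressing) function, which is collision-resistant against any polynomial-query quantum adversary. Assume toward a contradiction that the scheme is secure. As is standard, we may assume verification is essentially projective, hence non-destructive — ``reusable'' — so that a genuine banknote $\ket{\$}$ lies negligibly close to the $+1$ eigenspace of the verification measurement and survives polynomially many sequential verifications with only negligible total disturbance; this follows from a gentle-measurement argument applied to completeness, for all but a negligible fraction of oracles $H$.

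Next I would package a single execution of $\mathsf{Ver}^H(pk,\cdot)$ as one round of a classical-channel protocol between Alice and Bob. Alice's secret state is the banknote $\ket{\$}$ produced by $\mathsf{Gen}^H$; Bob's state is the public serial number / verification key $pk$ together with the ability to evaluate $H$ (one may equivalently picture $H$ as held by a third party answering the classical queries routed over the channel). In each round, whoever currently holds the banknote runs the \emph{quantum} part of the verification circuit locally — possible precisely because $\mathsf{Ver}$'s queries are classical — and whenever the circuit needs a query $x$ it measures and sends $x$ over the classical channel, receives $H(x)$ back over the channel, and at the end of the round announces the accept/reject bit; Bob flags a detection iff the announced bit is ``reject'' (or the protocol is otherwise deviated from). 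By completeness and reusability, an honest Alice running with the genuine $\ket{\$}$ is never flagged over any polynomial number of rounds, with negligible disturbance to the banknote. This is exactly a protocol of the form to which \Cref{thm:maininf} applies, with ``Bob detects the swap'' equivalent to ``the state swapped in for $\ket{\$}$ fails a genuine run of $\mathsf{Ver}^H(pk,\cdot)$''.

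Now invoke \Cref{thm:maininf}: after a sufficiently large polynomial number of rounds there is an Eve — computationally unbounded, making no oracle queries of her own, observing only the classical transcript — that produces a state $\sigma$ which, when substituted for Alice's state in the ensuing rounds, is not flagged by Bob with probability $1 - 1/\mathrm{poly}$; equivalently, $\sigma$ passes a genuine run of $\mathsf{Ver}^H(pk,\cdot)$ with probability $1 - 1/\mathrm{poly}$. Finally, assemble the quantum money adversary: on input $(pk, \ket{\$})$ with oracle access to $H$, $\mathcal{A}^H$ internally simulates this entire protocol — playing Alice with the real $\ket{\$}$, answering the routed classical queries via its own access to $H$, and running Eve on the resulting transcript — for the required number of rounds, and outputs the pair $(\ket{\$}, \sigma)$. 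The first register still passes verification by reusability, the second passes with probability $1 - 1/\mathrm{poly}$, so $\mathcal{A}$ clones the banknote while making only polynomially many (in fact only classical) queries to $H$; for a random $H$ this contradicts the assumed security, since $\mathcal{A}$ does not break collision resistance of $H$. Averaging over $H$ absorbs the negligible fraction of exceptional oracles.

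I expect the main obstacle to be the faithful packaging of the second paragraph together with the reusability bookkeeping: one must check that the constructed protocol genuinely meets the hypotheses of \Cref{thm:maininf} (the channel is classical; the only shared secret is $\ket{\$}$, with $pk$ treated as public; Eve's detection event coincides with verification failure), and — more delicately — that the genuine banknote both survives the polynomially many honest verification rounds and remains available to $\mathcal{A}$ at the end with only negligible degradation, so that $(\ket{\$},\sigma)$ is truly a successful clone. The gentle-measurement/amplification step making verification effectively projective, and the care needed because verification interleaves quantum operations with classical oracle calls (so the relevant ``measurement'' is only well-defined once $H$ is fixed), are the technical crux; invoking \Cref{thm:maininf} itself is then immediate.
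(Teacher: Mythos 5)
Your overall reduction is the same one the paper uses for its generalization of this theorem (the quantum money impossibility in the last section, which implies \Cref{cor:qminf} and hence \Cref{thm:AHYinf}): cast a run of the classical-query verifier as one round of a classical-channel protocol in which Alice holds the banknote, Bob holds the public key and the oracle, the verifier's classical queries are routed over the channel, and the accept bit is announced; invoke the impersonation theorem to get an Eve whose forged state passes a genuine verification with probability close to $1$; and assemble a cloner that outputs the still-valid real banknote together with Eve's forgery, contradicting query security of the scheme (and, for a random compressing oracle, collision resistance). That part is faithful to the paper.

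The genuine gap is your very first step: you claim that reusability of verification ``follows from a gentle-measurement argument applied to completeness,'' making verification effectively projective and non-destructive. The paper explicitly argues this is impossible in the classical-query setting: producing a classical query forces a measurement of the banknote, and the gentle-measurement (amplified/projective) implementation of the verifier is a coherent process, so the resulting verifier no longer makes only classical queries --- exactly the restriction the theorem is about. Worse, the derivation cannot be true in the generality you need: the paper notes that the Aaronson--Christiano scheme~\cite{STOC:AarChr12} can be modified into a \emph{one-time-correct} classical-query-verifier scheme relative to an oracle, whose post-verification state fails subsequent verifications; and whether one-time-correct quantum money with classical-query verification from collision resistance is impossible remains open. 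This is why both \cite{AC:AnaHuYue23} and this paper \emph{stipulate} reusable (many-time) correctness as part of the definition rather than deriving it. If you replace your gentle-measurement step by simply adopting the reusable-correctness definition (which is the intended reading of \Cref{thm:AHYinf}), the rest of your argument goes through and coincides with the paper's proof; also note that reusability is then what guarantees that all the honest accept bits are $1$ during the eavesdropping phase and that the real banknote still verifies at the end, so it is doing work in two places, not just at the final output.
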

Note that while hash functions are considered relatively mild --- and certainly far milder than the assumptions currently used to construct public key quantum money --- \Cref{thm:AHYinf} is a notable improvement on what was previously known in terms of lower-bounds, which was absolutely nothing.

On the other hand, \Cref{thm:AHYinf} actually does not rule out arbitrary constructions from collision-resistance, but instead only those where the verifier makes classical queries. They left it as an explicit open problem to remove this limitation, which remains open.

Rather than resolving their open question,
our \Cref{cor:qminf} actually significantly extends ~\cite{AC:AnaHuYue23} in a different direction to handle black-box constructions from \emph{truly arbitrary} cryptographic building blocks, while still keeping the restriction to classical-query verifiers. 

One interpretation of our results is that \cite{AC:AnaHuYue23} is actually saying little about building quantum money from collision resistance. Instead, it is initiating the study of classical-query verifiers in quantum money. In this view, the classical-query verifier is the main feature of their impossibility, and the main limitation is that it only handles quantum money built from hash functions, instead of stronger building blocks. Our result fully generalizes their result, entirely refuting the possibility of classical query verifiers.

We believe classical-query verifiers in quantum money are well-motivated, which we now elaborate on. A natural question, distinct from the computational hardness question above, is what quantum resources are required for a public key quantum money scheme. In particular, we imagine the bottleneck for widespread deployment comes from the verifier: while the mint is centralized and may in the not-too-distant future have the resources to build a full fault-tolerant quantum computer in order to create banknotes, once the banknotes are distributed they will be verified by ordinary users on comparatively lightweight devices such as cell phones. Can public key quantum money exist where the verification is light on quantum resources; for example, can NISQ-era devices verify public key quantum money?
In comparison, we already have experimental demonstrations of QKD.

Observe that all current quantum computers, despite being capable of running relatively large processes, are completely incapable of running any classical cryptographic primitive in superposition: in fact, the main bottleneck for many quantum factoring algorithms lies in coherently multiplying two large numbers in finite fields.
Thus, a very natural metric to separate light-weight quantum devices from heavy-weight quantum devices is whether or not they are able to coherently run classical cryptographic primitives. 

In light of this delineation, one way to meaningfully capture a NISQ-verifier is to make the verifier have only classical queries to the underlying cryptography. No existing scheme has such a property.

Under this motivation and interpretation, ~\cite{AC:AnaHuYue23} give an initial barrier to near-term public key quantum money. Then our work significantly expands~\cite{AC:AnaHuYue23} from collision-resistance to \emph{anything}, definitively resolving the question of quantum verification queries. Thus, any implementation of public key quantum money inherently requries verifiers with quantum computers capable of coherently running the cryptographic tools. This in particular hints that quantum money verifiers likely require full fault-tolerant quantum computers.

Note that our \Cref{cor:qminf} even considers tools that are inherently non-black box, such as zero knowledge proofs or indistinguishability obfuscation (iO), which may be run on other cryptographic building blocks. As such, our result, while being nominally ``black box'' captures almost all cryptographic techniques, including even a wide range of non-black-box techniques. For a more detailed discussion of this point, see \Cref{sec:interpretation}.

\begin{remark} We note that the dual question of making the mint NISQ or even classical has been been studied in a few works. A number of protocols such as \cite{STOC:AarChr12} and follow-up works have quantum mints, but the mint only makes classical queries to the underlying cryptographic tools. Even more, some works~\cite{RadSat19,STOC:Shmueli22} show that it actually is possible to have a fully classical mint, under suitable cryptographic assumptions. On the lower-bound side,~\cite{EC:Zhandry25a} shows a black-box separation between public key quantum money and various cryptographic tools, where the mint is restricted to making classical queries. Thus, classical-query mints are more prevalent in the literature. However, from a resource-limitation perspective, it makes much more sense to try to minimize the quantum resources required for the verifier.
\end{remark}

\paragraph{Cryptographic advantage of quantum queries.}
Interestingly, this result also seems to give us an example where quantum queries to an oracle can be \emph{cryptographically} more powerful than classical queries.
Recall that there exists a classical oracle relative to which information-theoretically secure public-key quantum money exists \cite{STOC:AarChr12}.
However, our result shows that any such scheme must make use of quantum queries to the oracle in the verification algorithm.
This reveals that, just like query complexity separations \cite{FOCS:Simon94,BV97}, quantum superposition access to a classical function could also yield better security in the cryptographic setting.

\section{Technical Overview}

\subsection{A classical intuition} 

Consider the classical version of our setup, where Alice and Bob share correlated classical strings. Let us call Alice's string $k$, which has length $n$. We will also assume without loss of generality that $k$ persists throughout the interaction, though Alice may also keep separate local dynamic storage. Alice and Bob now engage in a public dialog over a channel seen by the eavesdropper Eve.
Eve knows the protocol, but does not know $k$.

We want to show that Eve will eventually be able to impersonate Alice to Bob. In particular, Eve will eventually be able to devise a key $k'$ which, if swapped for Alice's key $k$, will be undetectable by Bob for several rounds of interaction.

The intuition is that, for each message from Alice to Bob (or vice versa), one of the following happens:
\begin{itemize}
    \item Conditioned on the previous messages, this message has significant correlation with $k$. In this case, the message from Alice reveals some information about $k$.
    \item Conditioned on the previous message, this message is almost independent of $k$.
\end{itemize}
The key point is that the first case can only happen a bounded number of times, since $k$ contains only $n$ bits of information. That is, Alice will eventually run out of entropy, at which point her next message is just determined by the previous messages and is independent of her key. But in this case, Eve can sample $k'$ from the distribution of Alice's key conditioned on the messages seen so far, and $k'$ will result in (approximately) the same output distribution as $k$. Thus, Bob cannot tell if Alice has $k$ or $k'$.

Now, precisely defining what it means for Eve to impersonate Alice requires some care. First, $k$ may contain a number $t$ that specifies Alice to behave one way for the first $t-1$ messages, and then behave a different way for message $t$ and beyond. If Eve tries to impersonate at message $t$, Eve's view will be independent of whether $k$ contains $t$ or $t+1$. But these two cases result in very different behavior by Alice, meaning Eve cannot impersonate Alice correctly.

On the other hand, Alice's key can only contain at most roughly $n$ of these trigger points. By randomly choosing the point at which to impersonate from a set of size much larger than $n$, Eve will evade the triggers, except with inverse-polynomially small probability.

More generally, Eve will only be able to impersonate Alice for a short amount of time (since if she goes for too long, she may hit another trigger). But Eve will be able to make that amount of time arbitrarily long, by waiting to see more messages. Moreover, Eve's impersonation will be unable to achieve a negligible error relative to Alice's, but the error can be made arbitrarily inverse-polynomially small by eavesdropping for longer. Such a statement can be proved by carefully employing classical information-theory inequalities.

A bit more formally, let $S$ be the random variable for Alice's secret $k$, and $A_1, A_2, \ldots$ be the messages sent.
The correlation in the two cases in our intuition can be captured by $I(S; A_i \mid A_{<i})$, the mutual information between $S$ and the latest message $A_i$ conditioned on the prior transcript $A_{<i}$.
When the conditional mutual information is low, we can argue the success of forgery by invoking Pinsker's inequality.

\subsection{Moving to quantum keys}

When we move to quantum keys, the high-level intuition remains the same but the situation gets somewhat trickier. Due to the observer effect in quantum mechanics, each message generated by Alice and Bob potentially changes their state. In fact, their local states can end up with very high amounts of entropy which grow with the number of messages.
Because Alice's state keeps expanding, it becomes difficult to argue that eventually her state has low conditional mutual information with the new message.
In fact, the new message could always have a maximum number of bits of conditional mutual information with her updated state if, e.g., Alice just alternates standard- and Hadamard-basis measurements on her state and reports the outcome as the message.

To illustrate our ideas, we start with a simplified information-theoretic task that captures the heart of our impersonation attack. Consider a hidden register $X$ (together with a purifying environment $E$) that is processed over $n$ rounds. In each round, the honest party applies some isometry to $X$ and then performs a coherent measurement that yields a classical outcome $Q_i$ released to the outside world, while keeping any private workspace. Our goal in this toy model is to show that in a sufficiently long interaction that exposes only the classical transcript $Q_1,\dots,Q_n$, there must exist a round $i$ whose fresh leakage $Q_i$ reveals only a vanishing amount of \emph{new} information about $E$ beyond what is already contained in the past transcript $Q_{<i}$. Equivalently, $Q_i$ can be approximately simulated from $Q_{<i}$ alone. This simplified version is good enough to show that Eve can at least simulate a single one of Alice's messages.

\begin{observation}
    The entropy of Alice's state conditioned on the transcript $H(X \mid Q_{\le i})$ can never increase, that is, measuring an unknown quantum state can never increase our uncertainty about the state.
\end{observation}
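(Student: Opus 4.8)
The plan is to route everything through the purifying environment and exploit the fact that, unlike Alice's own register, the environment is never touched during the interaction. First I would fix notation: let $E$ be a register purifying the initial state on $X$, held and acted on by no one, and for a length-$i$ transcript $\mathbf{q} = q_1 \cdots q_i$ write $\rho^{XE}_{\mathbf{q}}$ for the post-round-$i$ joint state of Alice's current register and $E$ conditioned on the transcript equalling $\mathbf{q}$, so that $H(X \mid Q_{\le i}) = \sum_{\mathbf{q}} \Pr[\mathbf{q}]\, H(\rho^X_{\mathbf{q}})$ and likewise $H(E \mid Q_{\le i}) = \sum_{\mathbf{q}} \Pr[\mathbf{q}]\, H(\rho^E_{\mathbf{q}})$.

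The first key step is an invariant proved by induction on $i$: conditioned on the full transcript $Q_{\le i}$, the joint state $\rho^{XE}_{\mathbf{q}}$ is \emph{pure} for every $\mathbf{q}$. The base case is the choice of $E$. For the inductive step, a round consists of (a)~an isometry on Alice's side --- which introduces the outgoing register and whatever fresh workspace Alice keeps, and preserves purity of the joint state with $E$ --- followed by (b)~a computational-basis measurement of the outgoing register $Q_i$ whose outcome is released; conditioning on that outcome projects the pure state onto a renormalized pure state on the remaining registers together with $E$. This is where the hypothesis that Alice \emph{retains all private workspace} is essential: were she allowed to discard a subsystem she could, for instance, copy $\ket{+}$ into an ancilla and trace it out, genuinely raising the entropy of her own register, and the observation would fail.

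Granting the invariant, the rest is immediate. Since $\rho^{XE}_{\mathbf{q}}$ is pure, its two marginals have equal von Neumann entropy, so $H(\rho^X_{\mathbf{q}}) = H(\rho^E_{\mathbf{q}})$ for every $\mathbf{q}$; averaging over transcripts gives $H(X \mid Q_{\le i}) = H(E \mid Q_{\le i})$ for every $i$. So it suffices to show $H(E \mid Q_{\le i})$ is nonincreasing in $i$, and this holds simply because $E$ is untouched in round $i$: conditioned on the length-$(i-1)$ transcript $\mathbf{p}$, the marginal $\rho^E_{\mathbf{p}}$ equals the average $\sum_{q_i} \Pr[q_i \mid \mathbf{p}]\, \rho^E_{\mathbf{p} q_i}$ of the round-$i$ conditional marginals, so concavity of entropy (equivalently, ``conditioning on the classical register $Q_i$ cannot increase entropy'') gives $H(E \mid Q_{\le i}) \le H(E \mid Q_{<i})$. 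Chaining the two equalities with this inequality yields $H(X \mid Q_{\le i}) \le H(X \mid Q_{<i})$, as claimed.

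The only real subtlety I anticipate is formalizing the model carefully enough that the purity invariant genuinely holds --- specifically, reading ``coherent measurement yielding a classical outcome, while keeping private workspace'' as an isometry on Alice's side followed by a computational-basis measurement of the \emph{released} register alone, with nothing discarded. Once the model is set up this way, each of the three steps above is essentially a one-liner, and no quantitative information-theoretic inequality (Pinsker and the like) is needed for this observation; those enter only in the estimates that build on it.
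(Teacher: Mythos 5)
Your proof is correct, but it takes a genuinely different route from the paper's. The paper proves the observation entirely on Alice's side, never invoking the purifying environment: writing the round as an isometry $M_{X\to X,Q_i}$ followed by the coherent copy $Q_i\to Q_iQ_i'$, it uses invariance of entropy under isometries to get $H(X\mid Q_{\le i-1})_{\rho_{i-1}}=H(XQ_iQ_i'\mid Q_{<i})_{\rho_i}$, then the Araki--Lieb triangle inequality (\Cref{lem:araki-lieb}) $H(XQ_iQ_i')\ge H(XQ_i)-H(Q_i')$, the identity $H(Q_i'\mid Q_{<i})=H(Q_i\mid Q_{<i})$, and the chain rule for conditioning on the classical $Q_i$. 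You instead bring in $E$ from the start: the purity invariant for the transcript-conditioned $XE$ state, equality of the marginal entropies of a pure bipartite state to trade $H(X\mid Q_{\le i})$ for $H(E\mid Q_{\le i})$, and concavity of entropy applied to the untouched register $E$. Both arguments are sound (and related, since Araki--Lieb is itself usually proved by purification plus subadditivity). What your route buys is that it essentially pre-proves the step the paper carries out immediately after the observation --- the paper separately establishes $H(XE\mid Q_{\le i})_{\rho_i}=0$ and uses exactly your duality to rewrite the per-round entropy drop as $I(E;Q_i\mid Q_{<i})\ge 0$ --- so your argument delivers that identification for free; what the paper's route buys is a statement self-contained on Alice's registers, with $E$ introduced only when the decrement is converted into a mutual-information bound for Pinsker. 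Your caveat that Alice must retain (or have recorded in the transcript) everything she touches is also the correct reading of the model, which the paper enforces by allowing only isometries plus measurements whose outcomes appear in the transcript.
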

\begin{proof}
Let $\rho_0$ be the initial state over system $X$.
In each iteration $i = 1, \ldots, n$, we start with $\rho_{i - 1}$ and apply the following operations to get $\rho_i$:
\begin{enumerate}
    \item Apply an isometry $M_{X \to X, Q_i}$. (For notational convenience, we reuse the symbol $X$ for the registers across rounds; the isometry $M$ may increase the dimension of the private register.)
    \item Apply a coherent measurement isometry $CNOT_{Q_i \to Q_i, Q_i'}$.
\end{enumerate}

Then for all $i$,
\begin{align*}
    H(X \mid Q_{\le i - 1})_{\rho_{i - 1}}
      &= H(X Q_i Q_i' \mid Q_{<i})_{\rho_i} \\
      &\ge H(XQ_i \mid Q_{<i})_{\rho_i} - H(Q_i' \mid Q_{<i})_{\rho_i} \\
      &= H(XQ_i \mid Q_{<i})_{\rho_i} - H(Q_i \mid Q_{<i})_{\rho_i} \\
      &= H(X \mid Q_{\le i})_{\rho_i},
\end{align*}
where the inequality is by the first part of the Araki--Lieb inequality (\Cref{lem:araki-lieb}).
\end{proof}

This gives a chain of inequalities $H(X)_{\rho_0} \ge H(X \mid Q_1)_{\rho_1} \ge \cdots \ge H(X \mid Q_{\le n})_{\rho_n} \ge 0$.
Taking the RHS of the first and the fourth line, we get that for every $n$, we have that there exists $i \le n$ such that
\begin{equation}
    \label{eq:rhoi-decrement-small}
    H(XQ_iQ_i' \mid Q_{<i})_{\rho_i} - H(X \mid Q_{\le i})_{\rho_i} \le H(X)_{\rho_0}/n.
\end{equation}

Now we take the purifying register $E$ for $X$ into account.
Specifically, at the beginning we have $H(XE)_{\rho_0} = 0$ by definition.
For transitioning from $\rho_{i - 1}$ to $\rho_i$, we are only tracing out $Q_i'$, thus $H(XE \mid Q_{\le i})_{\rho_i} = 0$ holds for all $i$ as well by induction.
Furthermore, $H(XQ_i Q_i' E \mid Q_{<i})_{\rho_i} = H(XE \mid Q_{<i})_{\rho_{i - 1}} = 0$ as well.
Therefore, by \eqref{eq:rhoi-decrement-small},
\[
I(E;Q_i\mid Q_{<i})_{\rho_i} = H(E \mid Q_{<i})_{\rho_i} - H(E \mid Q_{\le i})_{\rho_i} = H(XQ_iQ_i' \mid Q_{<i})_{\rho_i} - H(X \mid Q_{\le i})_{\rho_i} \le H(X)_{\rho_0}/n.
\]
Intuitively, this inequality says that eventually the new measurement results are almost independent of the environment register.
Thus, it is possible to reproduce $Q_i$ from $Q_{< i}$ without the knowledge of $E$.
This intuition is formalized through quantum Pinsker's inequality.

\begin{lemma}[{Quantum Pinsker's inequality \cite[Theorem 3.3]{HOT81-pinsker}}]
    Let $\rho_{XY}$ be a density matrix and $\sigma_{XY} := \rho_X \otimes \rho_Y$.
    Then $\norm{\rho - \sigma}_1 \le \sqrt{\frac2{\ln 2} \cdot I(X;Y)_\rho}$.
\end{lemma}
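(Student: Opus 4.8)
The plan is to recognize that the quantity $I(X;Y)_\rho$ on the right-hand side is exactly the quantum relative entropy $D(\rho\|\sigma)$ of $\rho = \rho_{XY}$ against $\sigma = \rho_X\otimes\rho_Y$ (with all logarithms base two), since $I(X;Y)_\rho = H(X)_\rho + H(Y)_\rho - H(XY)_\rho = D(\rho_{XY}\|\rho_X\otimes\rho_Y)$. So it suffices to prove the general statement $\norm{\rho-\sigma}_1 \le \sqrt{\tfrac{2}{\ln 2}\,D(\rho\|\sigma)}$ for arbitrary density matrices $\rho,\sigma$ on a common space, and I would do this by reducing to the classical Pinsker inequality via an optimal distinguishing measurement.

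Concretely, I would first let $\Pi$ be the projector onto the non-negative eigenspace of the Hermitian operator $\rho-\sigma$, and set $p := \Tr(\Pi\rho)$, $q := \Tr(\Pi\sigma)$. Using $\Tr(\rho-\sigma)=0$ and the decomposition of $\rho-\sigma$ into its orthogonal positive and negative parts, one obtains the standard identity $\norm{\rho-\sigma}_1 = 2(p-q)$ with $p\ge q$; that is, the two-outcome measurement $\{\Pi, I-\Pi\}$ already "sees" the full trace distance. Applying the corresponding measurement channel to $\rho$ and $\sigma$ produces the binary distributions $(p,1-p)$ and $(q,1-q)$, and by the data-processing (monotonicity) inequality for quantum relative entropy under CPTP maps, $D(\rho\|\sigma) \ge d(p\|q)$, where $d(p\|q) = p\log\tfrac{p}{q} + (1-p)\log\tfrac{1-p}{1-q}$ is the binary relative entropy.

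Finally I would invoke the classical binary Pinsker inequality in bits, $d(p\|q) \ge \tfrac{2}{\ln 2}(p-q)^2 = \tfrac{1}{2\ln 2}\norm{\rho-\sigma}_1^2$, and chain the two bounds to get $D(\rho\|\sigma) \ge \tfrac{1}{2\ln 2}\norm{\rho-\sigma}_1^2$, i.e.\ $\norm{\rho-\sigma}_1 \le \sqrt{2\ln 2\cdot D(\rho\|\sigma)}$. Since $2\ln 2 < 2/\ln 2$, this is in fact slightly stronger than the claimed bound; substituting $D(\rho\|\sigma) = I(X;Y)_\rho$ completes the proof.

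The only genuinely nontrivial ingredient is the monotonicity of quantum relative entropy under the measurement channel (Lindblad--Uhlmann); everything else — the identity $\norm{\rho-\sigma}_1 = 2\max_{0\le E\le I}\Tr(E(\rho-\sigma))$, the evaluation of the mutual information as a relative entropy, and the classical Pinsker estimate — is elementary. If one wished to avoid citing full monotonicity, it would suffice to establish it only for the pinching/measurement map, which follows from operator convexity of $t\mapsto t\log t$; that is the step I would expect to require the most care.
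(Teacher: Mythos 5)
Your proposal is correct. Note that the paper does not prove this lemma at all: it is imported by citation (Hiai--Ohya--Tsukada, Theorem 3.3), so there is no internal proof to compare against; your argument is the standard self-contained derivation. The three steps all check out: the identity $I(X;Y)_\rho = D(\rho_{XY}\,\|\,\rho_X\otimes\rho_Y)$ follows from $\log(\rho_X\otimes\rho_Y)=\log\rho_X\otimes I + I\otimes\log\rho_Y$ (and here $D$ is automatically finite since $\mathrm{supp}(\rho_{XY})\subseteq\mathrm{supp}(\rho_X\otimes\rho_Y)$); the projector onto the nonnegative part of $\rho-\sigma$ indeed gives $\norm{\rho-\sigma}_1=2(p-q)$ because $\rho-\sigma$ is traceless; and monotonicity of relative entropy under the two-outcome measurement channel plus binary Pinsker in bits, $d(p\|q)\ge \tfrac{2}{\ln 2}(p-q)^2$, yields $\norm{\rho-\sigma}_1\le\sqrt{2\ln 2\cdot I(X;Y)_\rho}$. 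Since $2\ln 2 < 2/\ln 2$, this is indeed a sharper constant than the one stated in the lemma, so the stated bound follows a fortiori (and none of the downstream uses in the paper would be harmed — they would only improve). The one ingredient that genuinely requires a nontrivial theorem is the data-processing step; as you say, it suffices to prove monotonicity for the pinching/measurement map rather than full Lindblad--Uhlmann, which is a reasonable economy if one wants the proof to be as elementary as possible.
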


Collecting the above, we get that $\frac12 \norm{(\rho_i)_{EQ_i} - (\rho_i)_{E} \otimes (\rho_i)_{Q_i}}_1 \le \sqrt{\frac{H(X)_{\rho_0}}{2
\ln 2 \cdot n}}$.
This implies that $Q_i$ is simulatable from $Q_{<i}$ up to a $\sqrt{\frac{H(X)_{\rho_0}}{2\ln 2 \cdot n}}$ loss in trace distance.

\paragraph{Extending to our general impersonation task.} Extending to the impersonation task outlined at the beginning requires a bit more work. First, we need to handle impersonation for several rounds of messages. Then, we need to simulate Alice's actual state, not just her messages. To simulate her state, we sample from the conditional distribution of her state conditioned on the messages seen so far. We need that this faked state still generates the approximate distribution of messages Alice would produce.
One idea is that we look at a collection of messages and identify when the total conditional mutual information is low.
However, this is still subtle because now Alice could generate multiple messages after receiving the swapped state, whereas above we only argued that the first message would be fine, or alternatively if we update our belief and forge a new state every time we see a new message.

A more challenging issue is that we also need to take Bob's part of the state as well as Bob's messages into account.
It is possible that Bob's message may reveal some information to Alice about her state so that the distinguishing the swapped state becomes easier.
In the full proof, we develop a careful hybrid argument to address the full technicalities.

\subsection{Applications}

The impossibility of information-theoretic quantum authentication over classical channels in \Cref{cor:authinf} is an almost immediate application of our impersonation attack from \Cref{thm:maininf}.

To apply \Cref{thm:maininf} to obtain our quantum money impossibility in \Cref{cor:qminf}, we cast the quantum money verifier making classical queries to the underlying cryptographic tool as an instance of our impersonation game, with Alice being the public verifier, Alice's state being the quantum banknote, and Bob being the bank and the cryptographic tool. The eavesdropper is then just watching the queries made by the verifier to the oracle, and is able to create a new quantum banknote state (just from seeing the queries) which passes verification. Since Alice still has the quantum banknote state, the combined Alice-plus-Eavesdropper has turned one banknote into two, breaking the security of the quantum money scheme.

\section{Preliminaries}

\subsection{Notation and Conventions}

\paragraph{Security and asymptotics.} Let $\lambda \in \mathbb{N}$ denote the security parameter. An algorithm is quantum polynomial time (QPT) if its running time is polynomial in $\lambda$. A function $\nu(\lambda)$ is negligible if for every polynomial $p$ there exists $\lambda_0$ such that for all $\lambda \ge \lambda_0$, $\nu(\lambda) < 1/p(\lambda)$. Throughout, we will typically suppress the security parameter, and leave it as an implicit input to all functions and algorithms.

\paragraph{Information measures.} All logarithms are base 2. For a density operator $\rho_X$, $H(X)_\rho := -\Tr(\rho_X \log \rho_X)$. For a joint state $\rho_{XY}$, the (quantum) conditional entropy is $H(X\mid Y)_\rho := H(XY)_\rho - H(Y)_\rho$. Mutual information is $I(X;Y)_\rho := H(X)_\rho + H(Y)_\rho - H(XY)_\rho$ and conditional mutual information is $I(X;Y\mid Z)_\rho := H(XZ)_\rho + H(YZ)_\rho - H(Z)_\rho - H(XYZ)_\rho$. The trace norm is $\norm{A}_1 := \Tr\sqrt{A^\dagger A}$ and the trace distance between states $\rho,\sigma$ is $\tfrac12\norm{\rho - \sigma}_1$.
For a multipartite state $\rho_{XYZ}$, we denote reduced states by subscripts, e.g., $(\rho)_{XY} := \Tr_Z(\rho)$ and $\rho_X := \Tr_{YZ}(\rho)$. Thus, $\rho_X$ may implicitly involve tracing out all registers other than $X$.

The data processing inequality for trace distance states that for any quantum channel $\Psi$ and states $\rho, \sigma$, $\norm{\rho - \sigma}_1 \ge \norm{\Psi(\rho) - \Psi(\sigma)}_1$.

\paragraph{Classical conditioning.} If $Q$ is classical, we write $\rho\mid Q{=}q$ (or $\rho\mid q$ if $Q$ is clear from the context) for the (normalized) post-measurement state and abbreviate $Q_{\le i} := (Q_1,\ldots,Q_i)$ and $Q_{< i} := (Q_1,\ldots,Q_{i-1})$. For cq states, conditioning on $Q$ is equivalent to averaging over outcomes as above: $H(X\mid Q)_\rho = \sum_q p_q H(X)_{\rho\mid q}$ and similarly for conditional mutual information.

For a classical-quantum (cq) state of the form $\rho_{XQ} = \sum_q p_q\, \rho_X^{(q)} \otimes \ket{q}\!\bra{q}$, one has
\begin{align*}
  H(X\mid Q)_\rho &= \sum_q p_q\, H\big(\rho_X^{(q)}\big),\\
  I(X;Y\mid Q)_\rho &= \sum_q p_q\, I\big(X;Y\big)_{\rho\mid Q{=}q};
\end{align*}
see, e.g., \cite[Sec.~11]{Wilde2017-QIT}.

\begin{lemma}[{Araki--Lieb inequality \cite[(3.1)]{AL70-entropy}}]
\label{lem:araki-lieb}
	Let $\rho_{ABC}$ be any mixed state, then
	$|H(A)_\rho - H(B)_\rho| \le H(A B)_\rho \le H(A)_\rho + H(B)_\rho$.
\end{lemma}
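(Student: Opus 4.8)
The third register $C$ plays no role in the statement, so it suffices to prove both inequalities for a bipartite state $\rho_{AB}$ (obtained by tracing out $C$, which only decreases nothing relevant). The plan is to first establish the right-hand (subadditivity) bound and then derive the left-hand (triangle) bound from it by a purification argument.

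For subadditivity, I would invoke the non-negativity of quantum relative entropy (Klein's inequality): for any states $\rho,\sigma$ on the same space, $D(\rho\|\sigma) := \Tr(\rho\log\rho) - \Tr(\rho\log\sigma) \ge 0$. Applying this with $\rho = \rho_{AB}$ and $\sigma = \rho_A \otimes \rho_B$, and expanding $\log(\rho_A\otimes\rho_B) = \log\rho_A \otimes I + I \otimes \log\rho_B$, gives $0 \le D(\rho_{AB}\|\rho_A\otimes\rho_B) = -H(AB)_\rho + H(A)_\rho + H(B)_\rho$, i.e.\ $H(AB)_\rho \le H(A)_\rho + H(B)_\rho$. (Alternatively one may simply cite subadditivity from a standard reference such as \cite{Wilde2017-QIT}.)

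For the triangle inequality, introduce a purifying register $R$ so that $\rho_{ABR}$ is pure with $\Tr_R(\rho_{ABR}) = \rho_{AB}$. For a pure state the entropies of complementary subsystems coincide (they share the same Schmidt coefficients), so $H(AB)_\rho = H(R)_\rho$, $H(A)_\rho = H(BR)_\rho$, and $H(B)_\rho = H(AR)_\rho$. Applying the subadditivity bound already proved, once to the bipartition $(B,R)$ and once to $(A,R)$, yields $H(BR)_\rho \le H(B)_\rho + H(R)_\rho$ and $H(AR)_\rho \le H(A)_\rho + H(R)_\rho$; rewriting these via the pure-state identities gives $H(A)_\rho - H(B)_\rho \le H(AB)_\rho$ and $H(B)_\rho - H(A)_\rho \le H(AB)_\rho$, which together are exactly $|H(A)_\rho - H(B)_\rho| \le H(AB)_\rho$.

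The only real content is subadditivity, whose proof rests on the operator convexity of $-\log$ underlying Klein's inequality; the triangle bound then follows purely formally from the symmetry of von Neumann entropy under purification, so I expect no further obstacle. One should only note that a purification always exists in finite dimension and that $H$ is finite here, so all the quantities manipulated are well defined.
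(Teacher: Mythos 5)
Your proof is correct: discarding $C$ is harmless since every entropy in the statement depends only on $\rho_{AB}$, subadditivity follows from $D(\rho_{AB}\|\rho_A\otimes\rho_B)\ge 0$, and the purification step converting subadditivity into the triangle bound $|H(A)_\rho - H(B)_\rho|\le H(AB)_\rho$ is exactly the standard Araki--Lieb argument. The paper itself offers no proof --- it imports the lemma directly from the cited reference --- and your derivation is essentially the argument of that reference (and of standard textbooks), so there is nothing to reconcile; the only cosmetic remark is that nonnegativity of relative entropy needs only Klein's inequality/concavity of $\log$, not operator convexity per se, and that finiteness of $D$ is automatic here since $\mathrm{supp}(\rho_{AB})\subseteq\mathrm{supp}(\rho_A)\otimes\mathrm{supp}(\rho_B)$, as you note.
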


\paragraph{Deferred measurement and coherent copies.} Measuring a computational-basis observable and recording the outcome in a classical register can be equivalently expressed by an isometry $\ket{x} \mapsto \ket{x}\ket{x}$ (e.g., a CNOT) given that the second register is not used in the future.
By the principle of deferred measurement, any measurement that only classically controls later computation can be assumed to occur at the end without loss of generality.

\paragraph{Communication channels and oracle access.} A public authenticated classical channel provides integrity but not confidentiality: the adversary observes all messages but cannot modify them. We distinguish classical oracle access (queries are classical strings) from quantum (superposition) access, which is the unitary described by $\ket x\ket y \mapsto \ket x\ket{y + f(x)}$.

\subsection{Quantum Cryptography}

We now consider the strong classical$\rightarrow$quantum extrapolation task, which is a strengthening of the classical$\rightarrow$quantum task defined in \cite{EC:QiaRaiZha25}.

\newcommand{\Gen}{\mathsf{Gen}}
\newcommand{\RegA}{A}
\newcommand{\RegB}{B}
\newcommand{\Adv}{\mathsf{Adv}}
\begin{definition}
  A classical$\rightarrow$quantum extrapolation problem is specified by a circuit $\Gen$ that produces a pure state, which can be written as
  \[
      \ket{\Gen} = \sum_{s} \alpha_s \ket{s}_{\RegA} \otimes \ket{\psi_s}_{\RegB}
  \]
  for some $\alpha_s \ge 0$ and unit vectors $\ket{\psi_s}$.
  We say (a uniform family of) $\Gen$ is $p$-strongly hard if for every quantum polynomial-time adversary $\Adv$ (potentially with auxiliary input), its output has at most $p$ overlap with the correct state $\ket{\psi_s}$ given the classical part $s$, i.e.
  \[
      \E[\Tr(\proj{\psi_s}\Adv(s))] = \Tr\mparen{\sum_s \alpha_s^2 \proj{\psi_s} \Adv(s)} \le p
  \]
  for all sufficiently large input lengths.
\end{definition}

The strong hardness of the classical$\rightarrow$quantum extrapolation problem is equivalent to what is defined as $(1 - p)$-weak state puzzles in \cite{STOC:KhuTom25} where they further show the following.

\begin{theorem}[{\cite[Theorem 1.7]{STOC:KhuTom25}}]
  \label{import-thm:owpuzzle-extrapolation}
  For any $p$ inverse polynomially bounded away from $1$, one-way puzzles exist if and only if there exists a $p$-strongly hard classical$\rightarrow$quantum extrapolation problem.
\end{theorem}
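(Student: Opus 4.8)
The plan is to prove the two implications of the equivalence separately, using the observation recorded just before the statement that a $p$-strongly hard extrapolation problem is the same object as a $(1-p)$-weak state puzzle, together with the known weak-to-strong hardness amplification for one-way puzzles.

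For the forward direction (one-way puzzles imply a $p$-strongly hard extrapolation problem for every $p$ that is inverse-polynomially bounded away from $1$), I would first invoke amplification to assume a one-way puzzle $(\mathsf{Samp},\mathsf{Ver})$ with negligible security. Take $\Gen$ to be the coherent execution of $\mathsf{Samp}$: writing $\mathsf{Samp}$'s action as a unitary on its randomness that produces designated key and puzzle registers, running it in superposition and relabelling the puzzle register as the classical part $A$ gives a pure state $\ket{\Gen}=\sum_s\alpha_s\ket{s}_A\ket{\psi_s}_B$ in which $\ket{\psi_s}$ is the conditional superposition over randomness consistent with puzzle $s$; by correctness of the puzzle, the key-register marginal of $\ket{\psi_s}$ is, on $\alpha_s^2$-average over $s$, supported on valid keys up to $\mathrm{negl}$ weight. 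Now suppose a QPT $\Adv$ achieves $\E_s[\Tr(\proj{\psi_s}\Adv(s))]>p$. Setting $q_s:=\Tr(\proj{\psi_s}\Adv(s))$ and using that for a pure target this quantity is the squared fidelity, Fuchs--van de Graaf gives trace distance $\le\sqrt{1-q_s}$ between $\Adv(s)$ and $\proj{\psi_s}$. Measuring the key register is a channel, so by the data processing inequality its outcome on $\Adv(s)$ is a valid key with probability at least its value on $\ket{\psi_s}$ minus $\sqrt{1-q_s}$; averaging over $s$, using that the honest value averages to $1-\mathrm{negl}$ and that $\E_s\sqrt{1-q_s}\le\sqrt{1-\E_s q_s}<\sqrt{1-p}$ by concavity, yields a QPT attacker breaking the one-way puzzle with probability $>1-\mathrm{negl}-\sqrt{1-p}$. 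This contradicts negligible security whenever $1-p\ge1/\mathrm{poly}$; in fact it shows $p$ may even be taken negligible, giving the claim for all admissible $p$ at once.

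For the reverse direction (a $(1-p)$-weak state puzzle with $1-p\ge1/\mathrm{poly}$ implies one-way puzzles), the template is to build a one-way puzzle whose puzzle is the classical part $s$ obtained by measuring register $A$ of $\ket{\Gen}$, and whose key is a short classical certificate of the answer state $\ket{\psi_s}$ extracted by the honest sampler, with the unbounded verifier $\mathsf{Ver}$ checking that the certificate is consistent with $\ket{\psi_s}$. A QPT solver for this puzzle would then be compiled into a QPT extrapolator producing a state of overlap exceeding $p$ with $\ket{\psi_s}$, contradicting hardness; one finishes by applying weak-to-strong amplification to the resulting one-way puzzle.

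I expect this reverse direction --- in particular, the choice of classical certificate and of $\mathsf{Ver}$ --- to be the main obstacle, and this is exactly what \cite{STOC:KhuTom25} carries out. The subtlety is that a single measurement outcome of $\ket{\psi_s}$ need not be a usable certificate: if $\ket{\psi_s}$ is spread out in every basis (say $\ket{\psi_s}\propto\sum_x(-1)^{f_s(x)}\ket{x}$ for a hard $f_s$), then no classical string is close to $\ket{\psi_s}$, so breaking the naive ``measure-and-certify'' puzzle does not by itself produce a good extrapolator. Overcoming this must exploit more of the structure --- crucially that $\ket{\psi_s}$ is efficiently preparable \emph{jointly} with $s$ --- perhaps by routing through an intermediate primitive such as distributional one-way puzzles, or via a verifier that references the generator circuit rather than $\ket{\psi_s}$ alone. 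A secondary point is that any amplification done here should be performed at the level of the one-way puzzle, since naively tensoring the answer state does not strictly decrease the achievable overlap (tracing out copies shows $\ket{\psi_s}^{\otimes m}$ is no harder than a single copy).
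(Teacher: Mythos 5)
First, note that the paper does not prove this statement at all: it is imported verbatim as \cite[Theorem 1.7]{STOC:KhuTom25}, so there is no in-paper proof to compare against. Judged as a standalone proof, your proposal has a genuine gap: the ``if'' direction (a $p$-strongly hard extrapolation problem, i.e.\ a $(1-p)$-weak state puzzle, implies one-way puzzles) is precisely the substantive content of the cited theorem, and you do not prove it. Your measure-and-certify template is one you yourself correctly refute (when $\ket{\psi_s}$ is far from every classical string, a solver for the certify-puzzle yields no extrapolator, and conversely hardness of extrapolation gives no hardness for the certify-puzzle), and the paragraph ends by deferring to \cite{STOC:KhuTom25} and gesturing at ``more structure'' or ``an intermediate primitive'' without supplying the actual construction or reduction. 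So the hard half of the equivalence remains unproven; acknowledging the obstacle is not the same as overcoming it.

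The forward direction is essentially right and is the standard argument: purify the puzzle sampler, let $A$ carry the puzzle and $B$ the conditional superposition over randomness and key, and convert any $\Adv$ with expected overlap above $p$ into a key-finder via Fuchs--van de Graaf plus data processing, using concavity to pass to the average. Two caveats there. First, your parenthetical that ``$p$ may even be taken negligible'' is not supported by this argument: when $p$ is negligible the derived attacker's advantage is only about $1-\sqrt{1-p}\approx p/2$, which is itself negligible and contradicts nothing; getting strongly hard (small-$p$) extrapolation from one-way puzzles requires an amplification step you have not given. Second, the quantitative bookkeeping only yields a contradiction when $1-\sqrt{1-p}$ is inverse-polynomial, i.e.\ when $p$ is non-negligible, so even the forward direction as written covers only part of the stated range of $p$.
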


\section{Our Main Theorem}

\newcommand{\As}{{\mathcal{A}}}

The setup is as follows. Alice and Bob are initially given a shared quantum state $\rho_0$. They then engage in an interactive protocol $\Pi$. We will divide the protocol into rounds, and in each round Alice sends some number $t$ of classical messages to Bob, and receives a corresponding $t$ answers from Bob. Because they may measure their state in order to generate their message, the state of Alice and Bob is continuously evolving. Let $\rho_i$ be the state after the $i$th message, so that the state after round $k$ is $\rho_{kt}$. Let $(\rho_i)_\As$ be the part of the state held by Alice. Let $T_i$ be the transcript from the first $i$ messages.

\paragraph{Impersonation attacks.} An impersonation attack on a protocol $\Pi$, which we will call Eve, is a quantum algorithm which views the classical messages being sent between Alice and Bob.
Eve does not know the initial state $\rho_0$. After some chosen round $k$ (chosen probabilistically by Eve), Eve produces a fake quantum state $(\rho'_k)_\As$ for Alice.  Then $(\rho_k)_\As$ is replaced with $(\rho_k')_\As$, and Alice and Bob interact for one more round. Let $T'_{kt+1}$ be the transcript of the communication, including the first $k$ rounds which were generated honestly, and then round $k+1$ which was generated using $(\rho_k')_\As$.

\paragraph{Our Theorem.} We now give our main theorem:

\begin{theorem}\label{thm:main}Let $n$ be the number of qubits in Alice's part of $(\rho_0)_\As$. Let $t$ be the number of messages exchanged in each round. Let $\epsilon$ be any desired inverse-polynomial probability. Then there is an impersonation attack Eve such that:
\begin{itemize}
    \item The number of passive rounds $k$ is at most $\lceil\frac{2nt}{\epsilon^2\ln2}\rceil$;
    \item The distributions $(k,T_{tk+1})$ and $(k,T'_{tk+1})$ are $\epsilon$-close in statistical distance.
\end{itemize}
Moreover, if Alice and Bob are efficient and one-way puzzles do not exist, then Eve is efficient.
\end{theorem}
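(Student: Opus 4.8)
The plan is to follow the classical intuition sketched in the overview but carry it out carefully for quantum keys, using the "toy model" entropy-telescoping argument as the engine. First I would set up the right information-theoretic quantity. Purify the initial state $\rho_0$ with an environment $E$ so that $H(\RegA E)_{\rho_0}=0$ where $\RegA$ is Alice's register (of $n$ qubits, so $H(\RegA)_{\rho_0}\le n$). Track the transcript registers $T_1,T_2,\dots$ as classical (coherently-copied) registers. As in the \textbf{Observation}, each message — whether generated by Alice or by Bob — is obtained by applying some isometry and then a coherent CNOT-style measurement; the crucial point is that Bob's messages also only \emph{decrease} $H(\RegA E\mid T_{\le i})$ (it stays $0$) and more importantly decrease $H(\RegA\mid T_{\le i})$, since a measurement on \emph{any} subsystem cannot increase our uncertainty about $\RegA$. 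So $H(\RegA)_{\rho_0}\ge H(\RegA\mid T_{\le 1})\ge\cdots\ge 0$ is a chain of length $tk+1$ that drops by a total of at most $n$. Hence in any block of consecutive messages there is a round $k$ such that the total entropy decrement over the $t$ messages of round $k+1$ is at most $n/(\text{number of blocks})\le nt/k$. Combined with the identity from the toy model, $I(E;T_{(k+1)\text{-block}}\mid T_{\le kt})\le nt/k$, which by quantum Pinsker is $\le\epsilon^2/2$ once $k\ge \lceil 2nt/(\epsilon^2\ln 2)\rceil$; note Pinsker as stated gives $\tfrac12\|\cdot\|_1\le\sqrt{I/(2\ln 2)}$, matching the claimed bound. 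Eve picks this $k$ (she can identify it from the public transcript alone, since the relevant conditional mutual informations are functions of the protocol and the transcript distribution — actually she picks $k$ uniformly from the block of candidate rounds, which is the source of the probabilistic choice and the $(k,\cdot)$ in the statement).

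Next I would describe Eve's forged state and the hybrid argument. Having fixed (a distribution over) the good round $k$ and the observed transcript prefix $T_{\le kt}=\tau$, Eve prepares $(\rho'_k)_\RegA$ as the state of Alice's register in the \emph{genuine} post-measurement state conditioned on $T_{\le kt}=\tau$ — i.e. she samples from Alice's conditional state. The content of the theorem is that swapping $(\rho_k)_\RegA$ for $(\rho'_k)_\RegA$ and running round $k+1$ honestly changes the output distribution of $T_{tk+1}$ by at most $\epsilon$ in statistical distance (jointly with $k$). The clean way to see this: conditioned on $T_{\le kt}=\tau$, the genuine global state is $(\rho_k\mid\tau)_{\RegA E B'}$ (with $B'$ Bob's register), and Eve's forged global state is $(\rho_k\mid\tau)_{\RegA}\otimes(\rho_k\mid\tau)_{E B'}$ — a product across the cut (Alice) vs. (everything Alice's next messages will be checked against). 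But $E$ together with $B'$ purifies $\RegA$ in the genuine state, and the low-mutual-information conclusion says precisely that $(\rho_k\mid\tau)_{E\,T\text{-next}}\approx$ product; I need to upgrade "$E$ decorrelated from the next $t$ messages" to "Alice's register decorrelated from $(E,B')$", which is where purification and the Araki–Lieb bookkeeping ($H(\RegA E\mid T)=0$) get used, exactly as in the toy-model computation that rewrote $I(E;Q_i\mid Q_{<i})$ as an entropy decrement of $\RegA$. Then the subsequent round $k+1$ — Alice's isometries/measurements on $\RegA$, interleaved with Bob's on $B'$ — is a fixed quantum channel applied to these two states, so data processing for trace distance transfers the $\le\epsilon$ bound to the output transcripts. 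Summing (averaging) over $\tau$ and over Eve's choice of $k$ gives the joint $\epsilon$-closeness of $(k,T_{tk+1})$ and $(k,T'_{tk+1})$.

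For the efficiency clause, observe that Eve's only non-trivial computational step is producing $(\rho'_k)_\RegA$, the sample from Alice's conditional state given the public transcript $\tau$. This is exactly a classical$\rightarrow$quantum extrapolation task: the "classical part" is $\tau$ (distributed as $T_{\le kt}$ under the honest interaction, which Eve can sample, or just observes), and the "quantum part" is Alice's register of the corresponding purification. If one-way puzzles do not exist, then by \Cref{import-thm:owpuzzle-extrapolation} there is no $p$-strongly-hard extrapolation problem for any $p$ bounded inverse-polynomially away from $1$, so there is a QPT $\Adv$ that, on input $\tau$, outputs a state with overlap $\ge 1-\epsilon'$ with the true conditional state $(\rho_k\mid\tau)_{\RegA}$ (on average over $\tau$); pushing $\epsilon'$ below a suitable inverse polynomial and converting overlap to trace distance gives a forged state within $O(\sqrt{\epsilon'})$ of Alice's conditional state, which feeds into the same data-processing argument. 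So Eve runs in quantum polynomial time whenever Alice and Bob do.

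The main obstacle I anticipate is the hybrid/purification step in the second paragraph: getting from "$Q_i$ (the new messages) are almost independent of $E$ given the past transcript" to the statement one actually needs — that Alice's \emph{physical register}, handed to Bob's side of the interaction, is almost interchangeable with a freshly resampled one — while simultaneously handling (i) several messages in the impersonated round rather than one, and (ii) Bob's messages feeding information back to the Alice-register during that round. The toy model only establishes the one-message, Alice-only-speaks case; the careful ordering of the hybrids (resample, then argue each of Bob's answers and each of Alice's follow-up messages in turn stays close, using data processing at every step) is the delicate part, and is presumably where the full proof spends most of its effort.
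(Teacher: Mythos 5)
Your overall skeleton matches the paper's: entropy telescoping via Araki--Lieb, Pinsker, a (uniformly random) choice of the impersonation round, resampling Alice's state conditioned on the public transcript as a classical$\rightarrow$quantum extrapolation task, and the efficiency clause via \Cref{import-thm:owpuzzle-extrapolation} --- all of this is how the paper proceeds. The genuine gap is in your ``clean way'' for the closeness step. You propose to show that, conditioned on the transcript prefix $\tau$, the forged global state $(\rho_k\mid\tau)_{X}\otimes(\rho_k\mid\tau)_{EB'}$ is close to the genuine $(\rho_k\mid\tau)_{XEB'}$ and then apply data processing once. That premise is not implied by the entropy bookkeeping and is false in general: the telescoping only controls quantities like $I(Y;Q_i\mid Q_{<i}A_{<i})$, i.e.\ the correlation of the other side with Alice's \emph{fresh message}, not $I(X;EB'\mid\tau)$, the correlation with Alice's \emph{whole register}. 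If Alice and Bob share EPR pairs that the protocol never touches, $X$ stays maximally entangled with the rest forever, so the product state stays at constant trace distance from the real state no matter how long Eve eavesdrops, even though impersonation is trivially possible. There is no ``upgrade'' from decoupling of the next messages to decoupling of the register, and the theorem cannot be obtained by a single data-processing step from such a global closeness claim.

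What the paper does instead is never compare global states: it bounds only the transcript difference, via a message-by-message hybrid over the $t$ messages of the impersonated round. Hybrid $i$ runs honestly through message $kt+i$, then swaps in a fresh copy of the registers $XQ_{kt+i+1}$ sampled from the state conditioned on the current transcript; adjacent hybrids are compared through an intermediate hybrid that also refreshes Bob's registers $YA_{kt+i+1}$, and each swap costs at most $\Delta^{Y}_{kt+i+1}+\Delta^{X}_{kt+i+1}$ by data processing. Crucially this requires a \emph{second} family of information bounds absent from your accounting: by symmetry of the setup, $\sum_i I(X;A_i\mid Q_{\le i}A_{<i})\le H(Y)_{\rho_0}=H(X)_{\rho_0}\le n$, which controls how much Bob's replies during the impersonated round can re-correlate with Alice's real register --- exactly your anticipated obstacle (ii), which your one-sided bound on $I(E;Q_i\mid\cdot)$ cannot address. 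Summing both families with Pinsker, Jensen, and Cauchy--Schwarz over the $t$ messages and averaging over $k\in[K]$ gives the bound $\sqrt{2tn/(K\ln 2)}\le\epsilon$; this is also where the factors $t$ and $2$ in the round bound come from, whereas your ``one Pinsker applied to the whole block'' computation with $nt/k$ does not correspond to a quantity the telescoping controls.
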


Note that even though our theorem does not take into account either party's private quantum state, it can be easily extended to this case: consider a variant of the original protocol where at the end of those rounds, we have an additional round of communication where either Alice or Bob runs the optimal distinguisher and sends their result to the other party.

\begin{proof} To set up the proof, we first introduce the formal notation used to capture the honest authentication procedures. 

Let $\rho_0$ be the initial state over Alice and Bob. We can assume without loss of generality that $\rho_0$ is pure, by providing any purification to Bob; this does not affect the size of Alice's state.

We will partition the joint system $\rho_0$ into two registers $XY$ where Alice holds $X$ and Bob holds $Y$.
Furthermore, Alice will hold a register $Q_1$ that she will measure (in the computational basis) to obtain the first classical message to send to Bob.
In each iteration $i = 1,2,\cdots$, we start with $\rho_{i - 1}$ and apply the following operations to get $\rho_i$:
\begin{enumerate}
    \item Apply a coherent measurement isometry $CNOT_{Q_i \to Q_i, Q_i'}$.
    Let this state be named $\alpha_i$.
    \item Bob applies an isometry $M'$ with measurement on $Y Q_i$ (without loss of generality, we can assume $Q_i$ is first classically copied into $Y$), resulting in an updated register $Y$ and a classical random variable $A_i$. (Similarly, we capture the classical random variable by considering having a standard-basis copy of it in a separate register $A_i'$.)
    Let this state be named $\beta_i$. $A_i$ is given to Alice.
    \item Alice applies an isometry $M_{X A_i \to X, Q_{i + 1}}$ where similarly $A_i$ is also only used classically.
    (This step can be expanded into two steps like above, but this is unnecessary for the purpose of analyzing the $X$ part.)
\end{enumerate}
We will let $q_i$ denote the actual classical message sent to Bob, and $a_i$ the response. Thus, $T_i$ contains $(q_j,a_j)_{j\leq i}$.
Observe that for all $i$,
\begin{align*}
    H(XQ_i \mid Q_{\le i - 1}A_{\le i - 1})_{\rho_{i - 1}}
      &= H(X Q_i Q_i' \mid Q_{<i} A_{<i})_{\alpha_i} \\
      &\ge H(XQ_i \mid Q_{<i} A_{<i})_{\alpha_i} - H(Q_i' \mid Q_{<i} A_{<i})_{\alpha_i} \\
      &= H(XQ_i \mid Q_{<i} A_{<i})_{\alpha_i} - H(Q_i \mid Q_{<i} A_{<i})_{\alpha_i} \\
      &= H(X \mid Q_{\le i} A_{<i})_{\alpha_i} \\
      &= H(X \mid Q_{\le i} A_{<i})_{\beta_i} \\
    &\ge H(X \mid Q_{\le i} A_{\le i})_{\beta_i} \\
      &= H(X Q_{i + 1} \mid Q_{\le i} A_{\le i})_{\rho_i}.
    \end{align*}
    The first inequality is by Araki--Lieb (\Cref{lem:araki-lieb}), and the second inequality is due to the fact that conditioning on an additional classical variable $A_i$ cannot increase the conditional entropy.
    Also similarly, we can average the single-round decrement in the above chain to obtain an index $i$ with a small drop, just as in the simplified setting from the Technical Overview.
    Concretely, for each $i$ the difference
    \[
      H(XQ_iQ_i' \mid Q_{<i}A_{<i})_{\alpha_i} - H(X \mid Q_{\le i}A_{<i})_{\alpha_i}
      = H(XQ_i \mid Q_{\le i-1}A_{\le i-1})_{\rho_{i-1}} - H(X \mid Q_{\le i}A_{<i})_{\alpha_i}
    \]
    is upper bounded by the actual per-round drop
    $H(X \mid Q_{\le i-1}A_{\le i-1})_{\rho_{i-1}} - H(X \mid Q_{\le i}A_{\le i})_{\beta_i}$,
    and these drops telescope over $i=1,2,\ldots$ to at most $H(X)_{\rho_0}\leq n$.
    Therefore, we get that
    \begin{equation}
      \label{eq:auth-decrement-small}
      \sum_i H(XQ_iQ_i' \mid Q_{<i}A_{<i})_{\alpha_i} - H(X \mid Q_{\le i}A_{<i})_{\alpha_i} \le H(X)_{\rho_0}\leq n.
    \end{equation}
    As before, we now view $Y$ as a purifying “environment” for $X$ conditioned on the classical transcript revealed so far.
    Indeed, $\rho_0$ is pure on $X Y Q_1$, and every operation we used either copies a classical register by an isometry (the CNOT creating $Q_i'$) or measures into an explicitly recorded classical register (Bob’s $A_i$), so conditioning on $Q_{<i},A_{<i}$ leaves the joint state on $X Y Q_i Q_i'$ pure.
    Hence,
    $H(Y \mid Q_{<i}A_{<i})_{\alpha_i} = H(XQ_iQ_i' \mid Q_{<i}A_{<i})_{\alpha_i}$
    and
    $H(Y \mid Q_{\le i}A_{<i})_{\alpha_i} = H(X \mid Q_{\le i}A_{<i})_{\alpha_i}$,
    and \eqref{eq:auth-decrement-small} becomes
    \begin{align*}
        \sum_i I(Y;Q_i\mid Q_{<i}A_{<i})_{\alpha_i}
        &= \sum_i  H(Y \mid Q_{<i}A_{<i})_{\alpha_i} - H(Y \mid Q_{\le i}A_{<i})_{\alpha_i} \\
        &= \sum_i  H(XQ_iQ_i' \mid Q_{<i}A_{<i})_{\alpha_i} - H(X \mid Q_{\le i}A_{<i})_{\alpha_i} \\
        &\le H(X)_{\rho_0}\leq n.
    \end{align*}
    Intuitively, since the sum of these conditional mutual informations is bounded by the initial entropy $H(X)_{\rho_0}$, most increments must be small; in particular, there exists an index with $I(Y;Q_i\mid Q_{<i}A_{<i})$ small.
    By quantum Pinsker's inequality (applied to the state conditioned on the transcript), for such an index the fresh outcome $Q_i$ is almost independent of the environment $Y$ given $(Q_{<i},A_{<i})$, so $Q_i$ is simulatable from $Q_{<i}$.

Note that the role of $X, Q_i$ and $Y, A_i$ are symmetrical.
Therefore, a similar argument would also yield that
\begin{align*}
        \sum_i I(X;A_i\mid Q_{\le i}A_{<i})_{\alpha_i}
	&\le H(Y)_{\rho_0}=H(X)_{\rho_0}\leq n.
\end{align*}

Now, we consider the impersonation algorithm Eve as follows.
Eve chooses a random $k \gets [K]$ for $K \ge \frac{2nt}{\epsilon^2\ln2}$, while recording the classical query transcript.

After seeing the transcript $T_{kt}$ for the first $k$ rounds, the forging algorithm prepares the registers $XQ_{kt + 1}$ of state $\rho_{kt} \mid T_{kt}$.
This is the only step that might be inefficient.
Since this exactly corresponds to a classical$\to$quantum extrapolation problem, the whole attack is efficient if one-way puzzles do not exist.

\begin{lemma}\label{lemma:tracedistance}
  The trace distance between the real transcript and the transcript generated by the forging algorithm for the round chosen by the forging algorithm is at most $\sqrt{\frac{2t \cdot H(X)_{\rho_0}}{K\ln2}}\leq \sqrt{\frac{2tn}{K\ln2}}$.
\end{lemma}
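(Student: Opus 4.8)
The plan is to establish \Cref{lemma:tracedistance} by a hybrid argument over the individual messages that make up the chosen round, feeding each hybrid step into the quantum Pinsker inequality stated above and then averaging the resulting estimate over $k\gets[K]$ using the two telescoped bounds $\sum_i I(Y;Q_i\mid Q_{<i}A_{<i})_{\alpha_i}\le H(X)_{\rho_0}$ and $\sum_i I(X;A_i\mid Q_{\le i}A_{<i})_{\alpha_i}\le H(Y)_{\rho_0}=H(X)_{\rho_0}$ derived earlier in this proof. (Efficiency is not at issue: preparing Eve's state is the only possibly inefficient step and, as noted, is a classical$\to$quantum extrapolation instance; the lemma itself is purely information-theoretic.)

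First I would fix $k$ and condition on the honestly generated transcript $T_{kt}$ of the first $k$ rounds. Since Eve is passive over those rounds, the real and forged executions agree up to that point, so it suffices to bound — in expectation over $k$ and over $T_{kt}$ — the statistical distance between the transcript of round $k+1$ generated from the real post-round-$k$ state $\rho:=\rho_{kt}\mid T_{kt}$ and the one generated after swapping in Eve's forged state, namely Alice's reduced state $\rho_{XQ_{kt+1}}$ tensored with Bob's reduced state $\rho_Y$ (the Alice--Bob entanglement is discarded, its purification ending up with Eve and never touched again). Both executions then apply the \emph{same} fixed sequence of local maps to play out the round — a coherent measurement of $Q_i$ revealing $q_i$ to Bob, Bob's isometry on $(Y,q_i)$ revealing $a_i$ to Alice, and Alice's isometry on $(X,a_i)$ preparing $Q_{i+1}$ — so by data processing for trace distance it is enough to track how far apart the real and forged states (together with the classical transcript registers produced so far) drift over these $2t$ messages.

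The crux is a hybrid over those $2t$ messages in which, one message at a time, we replace the correct (entangled) behaviour by the ``stale'' behaviour Eve's forged state actually exhibits. When Alice releases $q_i$: in the real execution this outcome refines Bob's register to a $q_i$-dependent one, whereas with Eve's state (independent of $Y$) it does not; as $q_i$ has the same conditional marginal law in both, data processing through Bob's next isometry bounds the incurred distance by $\tfrac12\norm{\rho_{YQ_i}-\rho_Y\otimes\rho_{Q_i}}_1$ conditioned on the transcript so far, hence (quantum Pinsker, then concavity of $\sqrt{\cdot}$ to average over the transcript) by $\sqrt{\tfrac1{2\ln2}\,I(Y;Q_i\mid Q_{<i}A_{<i})_{\alpha_i}}$. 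Symmetrically, when Bob releases $a_i$: the real execution refines Alice's register (since $a_i$ is correlated with $X$ through $Y$) while Eve's committed state is not updated this way, costing at most $\sqrt{\tfrac1{2\ln2}\,I(X;A_i\mid Q_{\le i}A_{<i})_{\alpha_i}}$. The point that makes the entanglement harmless is that at each step one only ever compares the \emph{reduced} states on the register the next map actually touches, so the (possibly huge) Alice--Bob entanglement entropy never enters — only the fresh conditional mutual information released at that step does. Summing the $2t$ step costs by the triangle inequality, applying Cauchy--Schwarz across the $2t$ terms, and averaging over $k\gets[K]$ via concavity yields
\[
\tfrac12\norm{(k,T_{tk+1})-(k,T'_{tk+1})}_1\le\E_k\sqrt{\tfrac{2t}{2\ln2}\,C_k}\le\sqrt{\tfrac{2t}{2\ln2}\,\E_k[C_k]},\qquad C_k:=\sum_{i\in\text{round }k+1}\bigl(I(Y;Q_i\mid Q_{<i}A_{<i})_{\alpha_i}+I(X;A_i\mid Q_{\le i}A_{<i})_{\alpha_i}\bigr).
\]
Since the rounds indexed by $k\gets[K]$ are disjoint, $\E_k[C_k]$ is $\tfrac1K$ times the sum of these mutual informations over all messages in those rounds, which by the two telescoped bounds is at most $\tfrac1K\bigl(H(X)_{\rho_0}+H(Y)_{\rho_0}\bigr)=\tfrac{2H(X)_{\rho_0}}{K}$; plugging in gives $\sqrt{\tfrac{2t\,H(X)_{\rho_0}}{K\ln2}}\le\sqrt{\tfrac{2tn}{K\ln2}}$, as claimed.

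I expect the genuinely delicate part to be making this hybrid rigorous. Unlike the single-message warm-up in the technical overview, Bob's replies here feed back into Alice's later messages, so one has to verify that the discrepancy introduced at each message really is dominated by \emph{that} message's conditional mutual information \emph{after} conditioning on everything broadcast so far, and that the ``stale'' forged behaviour at later messages — in particular, Eve committing to one fake state rather than re-deriving her posterior after each of Bob's replies — contributes nothing beyond what the second chain $\sum_i I(X;A_i\mid Q_{\le i}A_{<i})_{\alpha_i}\le H(Y)_{\rho_0}$ already budgets for. Keeping careful track of the order in which registers are coherently measured and copied — so that every conditioned state remains a bona fide post-measurement state and the entanglement entropy provably cancels out of each step — is where essentially all the care is needed.
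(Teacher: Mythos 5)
Your proposal is correct and follows essentially the same route as the paper: per-message conditional Pinsker plus Jensen to bound the $\Delta^Y$/$\Delta^X$-type quantities, a hybrid over the $2t$ messages of the chosen round split into Alice-message and Bob-reply replacements bounded via data processing, and a final averaging over $k\gets[K]$ using the two telescoped conditional-mutual-information bounds. The only cosmetic difference is that you apply Cauchy--Schwarz jointly across all $2t$ terms and a single Jensen over $k$, whereas the paper treats the $Q$- and $A$-chains separately and adds the two square roots, yielding the identical bound $\sqrt{2t\,H(X)_{\rho_0}/(K\ln 2)}$.
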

\begin{proof}
  We first formalize the simulability guarantees via Pinsker and classical conditioning.
  For each step $i$, and for each fixed transcript prefix $(q_{<i},a_{<i})$, define
  \[
    \Delta_i^{Y}(q_{<i},a_{<i}) := \tfrac12\,\Big\|\big(\alpha_i\big)_{YQ_i\mid q_{<i},a_{<i}} - \big(\alpha_i\big)_{Y\mid q_{<i},a_{<i}}\otimes\big(\alpha_i\big)_{Q_i\mid q_{<i},a_{<i}}\Big\|_1,
  \]
  and similarly
  \[
    \Delta_i^{X}(q_{\le i},a_{<i}) := \tfrac12\,\Big\|\big(\beta_i\big)_{XA_i\mid q_{\le i},a_{<i}} - \big(\beta_i\big)_{X\mid q_{\le i},a_{<i}}\otimes\big(\beta_i\big)_{A_i\mid q_{\le i},a_{<i}}\Big\|_1.
  \]
  Quantum Pinsker (applied to each fixed transcript) gives
  \[
    \Delta_i^{Y}(q_{<i},a_{<i}) \le \sqrt{\tfrac1{2\ln2}\, I\big(Y;Q_i\mid q_{<i},a_{<i}\big)_{\alpha_i}},\qquad
    \Delta_i^{X}(q_{\le i},a_{<i}) \le \sqrt{\tfrac1{2\ln2}\, I\big(X;A_i\mid q_{\le i},a_{<i}\big)_{\beta_i}}.
  \]
  Averaging over the classical transcript and using Jensen's inequality (concavity of the square root), we obtain
  \[
    \mathbb{E}[\Delta_i^{Y}] \le \sqrt{\tfrac1{2\ln2}\, I(Y;Q_i\mid Q_{<i}A_{<i})_{\alpha_i}},\qquad
    \mathbb{E}[\Delta_i^{X}] \le \sqrt{\tfrac1{2\ln2}\, I(X;A_i\mid Q_{\le i}A_{<i})_{\beta_i}}.
  \]
  Summing over the $t$ steps in a run indexed by $k$ (i.e., $i=kt+1,\ldots,kt+t$) and applying Cauchy–Schwarz,
  \begin{align*}
    \mathbb{E}\left[\sum_{i=kt+1}^{kt+t} \Delta_i^{Y}\right]
      &\le \sum_{i=kt+1}^{kt+t} \sqrt{\tfrac1{2\ln2}\, I(Y;Q_i\mid Q_{<i}A_{<i})_{\alpha_i}}
      \le \sqrt{\frac t{2\ln2}}\sqrt{\sum_{i=kt+1}^{kt+t} I(Y;Q_i\mid Q_{<i}A_{<i})_{\alpha_i}},\\
    \mathbb{E}\left[\sum_{i=kt+1}^{kt+t} \Delta_i^{X}\right]
      &\le \sqrt{\frac t{2\ln2}\, \sum_{i=kt+1}^{kt+t} I(X;A_i\mid Q_{\le i}A_{<i})_{\beta_i}}.
  \end{align*}
  Averaging further over the random choice of $k\in[K]$ and using the bounds
  $\sum_i I(Y;Q_i\mid Q_{<i}A_{<i})_{\alpha_i} \le H(X)_{\rho_0}$ and
  $\sum_i I(X;A_i\mid Q_{\le i}A_{<i})_{\alpha_i} \le H(X)_{\rho_0}$ and Jensen's,
  we get
  \[
    \mathbb{E}_{k, q, a}\left[\sum_{i=kt+1}^{kt+t} \Delta_i^{Y}\right]
      \le \sqrt{\frac{t \cdot H(X)_{\rho_0}}{{2\ln2} \cdot K}},\qquad
    \mathbb{E}_{k, q, a}\left[\sum_{i=kt+1}^{kt+t} \Delta_i^{X}\right]
      \le \sqrt{\frac{t \cdot H(X)_{\rho_0}}{{2\ln2} \cdot K}}.
  \]
  
  \newcommand{\Hybrid}{\textsf{Hyb}}
  We consider the following sequence of hybrids $\Hybrid_0, \ldots, \Hybrid_t$, where $\Hybrid_i$ is defined as follows:
  \begin{enumerate}
    \item We start with running the original algorithm until step $kt + i$, obtaining $q_{kt + 1\ldots kt + i}, a_{kt + 1\ldots kt + i}$.
    \item We prepare a fresh copy of state $\rho_{kt + i} \mid q_{\le kt + i} a_{\le kt + i}$ conditioned on current transcript.
    We then replace the $XQ_{kt + i + 1}$ register with the corresponding register from the fresh copy.
    \item We finish running the algorithm, obtaining $q_{kt + i + 1\ldots kt + t}, a_{kt + i + 1\ldots kt + t}$.
  \end{enumerate}
  We can see that the real transcript corresponds to $\Hybrid_t$ and the transcript generated by the forging algorithm corresponds to $\Hybrid_0$.
  Therefore, it suffices to show that the trace distance between $\Hybrid_i$ and $\Hybrid_{i + 1}$ is small for all $i$.

  For each $i$, we further consider an additional hybrid $\Hybrid_i'$ where after the message $q_{kt + i + 1}$, we replace the $YA_{kt + i + 1}$ registers from a fresh copy of $\alpha_{kt + i} \mid q_{\le kt + i + 1} a_{\le kt + i}$ conditioned on current transcript.

  Observe that to bound the difference between $\Hybrid_i$ and $\Hybrid_i'$, it suffices to bound the trace distance between the two states immediately after $q_{kt + i + 1}$ is measured and, for $\Hybrid_i'$, after the $Y$ register is refreshed.
  By definition of $\Delta_i^{Y}$, the contribution of this replacement to the total trace distance is at most $\Delta_{kt+i+1}^{Y}$ by data processing inequality for trace distance, and similarly for the $XA$ replacement it is at most $\Delta_{kt+i+1}^{X}$.
  Summing over $i=0,\ldots,t-1$ and using the conditioned bounds on the sums of $\Delta$'s above, we conclude that
  \[
    \tfrac12\norm{\Hybrid_0 - \Hybrid_t}_1 \le \E_{k, q, a}\sum_{i=0}^{t-1} \big(\Delta_{kt+i+1}^{Y} + \Delta_{kt+i+1}^{X}\big)
      \le \sqrt{\frac{2t \cdot H(X)_{\rho_0}}{K\ln2}}.
  \]

\end{proof}
Plugging any $K\ge\frac{2nt}{\epsilon^2\ln2}$ into \Cref{lemma:tracedistance} gives \Cref{thm:main}.
\end{proof}

\section{Quantum Authentication Lower Bound}

\begin{definition}
  A quantum authentication scheme over classical channels consists of the following procedures:
  \begin{enumerate}
    \item \emph{Setup phase}: The two parties Alice and Bob initially share a joint quantum state $\ket{\Init}_{AB}$.
    \item \emph{Authentication phase}: Alice and Bob interactively exchange $t$ pairs of classical messages over a public authenticated classical channel (integrity without confidentiality) and at the end, Bob either accepts or rejects the authentication. This phase can be repeated in which case the two parties would run the same protocol using their leftover state from the previous round.
  \end{enumerate}

  We say that the scheme has $K$-time completeness $c$ if after $K$ rounds of authentication phases, Bob accepts all $K$ rounds with probability at least $c$.

  A $K$-time attack is a malicious party Eve who passively eavesdrops on all classical communication between Alice and Bob and interacts with Alice up to $K$ rounds of authentication but at some point, Eve needs to impersonate Alice to Bob in one of the $K$ rounds without seeing what Alice would have sent.
  The attack is considered successful if Bob accepts in that round.
  We say that the scheme has $K$-time soundness $s$ if for any $K$-time attack, the probability that Bob accepts is at most $s$.
\end{definition}

This soundness definition captures the information-theoretic setting.
In the computational setting, we instead consider a family of such schemes indexed by $\lambda$, and computational soundness is defined by restricting Eve to be QPT and requiring the soundness error to be negligible in $\lambda$.

\Cref{thm:main} almost immediately implies the following:

\begin{corollary}If there exists an unbounded-polynomial-round quantum authentication scheme over public authenticated classical channels that is computationally sound and has completeness negligibly close to $1$, then one-way puzzles exist.
\end{corollary}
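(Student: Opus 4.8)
The plan is to prove the contrapositive: assuming one-way puzzles do not exist, we exhibit a QPT attacker that breaks computational soundness, using the stated completeness and the fact that the scheme runs for an unbounded polynomial number of rounds. First, cast the authentication scheme as an instance of the interactive protocol $\Pi$ of \Cref{thm:main}: take $\rho_0 = \proj{\Init}$, let Alice hold her $n = \mathrm{poly}(\lambda)$-qubit part of the shared state, and let each authentication phase be one round of $t = \mathrm{poly}(\lambda)$ classical message pairs. To make Bob's accept/reject decision visible from the released classical transcript, apply the remark following \Cref{thm:main} and append to each round one extra message carrying Bob's decision bit; this only changes $t$ to $t+1$ and leaves everything else intact.

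Next, fix any inverse polynomial $\epsilon$ (say $\epsilon = 1/\lambda$) and set $K := \lceil 2nt/(\epsilon^2\ln 2)\rceil = \mathrm{poly}(\lambda)$. By \Cref{thm:main} there is an impersonation attack Eve that eavesdrops passively on $k \le K$ rounds (with $k$ chosen by Eve), swaps Alice's register for a fake state $(\rho_k')_\As$, and lets Alice and Bob run round $k+1$, so that the joint distributions $(k, T_{tk+1})$ and $(k, T'_{tk+1})$ are $\epsilon$-close in statistical distance. The only potentially inefficient step of Eve is a classical$\to$quantum extrapolation; since Alice and Bob are QPT and one-way puzzles do not exist, \Cref{import-thm:owpuzzle-extrapolation} together with the ``Moreover'' clause of \Cref{thm:main} makes Eve QPT. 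Eve is then a legal $(K{+}1)$-time attack: it passively eavesdrops for $k$ rounds and impersonates Alice in round $k+1$ without ever seeing what Alice would have sent.

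It remains to lower-bound the attack's success probability. In the honest execution (no swap), which is unaffected by Eve's passive eavesdropping on the authenticated channel, $(K{+}1)$-time completeness says Bob accepts all $K+1$ rounds — in particular round $k+1$, for every fixed $k$ and hence on average over Eve's random choice of $k$ — with probability at least $c = 1 - \mathrm{negl}(\lambda)$. Since Bob's round-$(k{+}1)$ decision bit is recorded in the transcript and the swapped transcript distribution is $\epsilon$-close to the honest one, under Eve's attack Bob accepts in round $k+1$ with probability at least $c - \epsilon \ge 1 - \mathrm{negl}(\lambda) - 1/\lambda \ge 1/2$ for all sufficiently large $\lambda$. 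Thus the scheme has $(K{+}1)$-time soundness error at least $1/2$ against a QPT adversary, contradicting computational soundness; therefore one-way puzzles exist.

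I expect no genuine obstacle here, only bookkeeping: (i) making Bob's accept bit a function of the released classical transcript, handled by the appended-message trick; (ii) the quantifier step that $(K{+}1)$-time completeness on ``all rounds'' yields acceptance of the single round $k+1$ for arbitrary $k$, which is immediate since accepting all rounds implies accepting round $k+1$; and (iii) checking that $\epsilon$ may be taken inverse-polynomial while keeping $K = \mathrm{poly}(\lambda)$, which follows directly from the bound on the number of passive rounds in \Cref{thm:main}.
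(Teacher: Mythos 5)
Your proposal is correct and follows essentially the same route as the paper's proof: append Bob's accept/reject bit to the classical transcript, invoke \Cref{thm:main} (with the one-way-puzzle clause making Eve QPT), and conclude that the impersonated round is accepted with probability within $\epsilon$ of the completeness $c$, contradicting computational soundness. Your write-up merely spells out the parameter bookkeeping ($\epsilon$, $K$, the contrapositive) that the paper leaves implicit.
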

\begin{proof}At the end of each authentication, we have Bob send a bit indicating whether he accepted or rejected. By completeness, each of these bits will be 1 with probability $c$. We then apply \Cref{thm:main}, which shows that Eve can choose some round and interact with Bob, causing Bob's acceptance bit to be 1 with probability close to $c$. In other words, Eve caused Bob to accept.
\end{proof}

\section{Application to Quantum Money}

\newcommand{\gen}{{\sf Gen}}
\newcommand{\ver}{{\sf Ver}}
\newcommand{\pk}{{\sf pk}}
\newcommand{\Os}{\mathcal{O}}
\newcommand{\money}{\mathord{\text{\$}}}

\subsection{Simulatable Oracles}

A classical oracle is an exponential-sized object, and as such, in general it cannot be constructed or evaluated efficiently. However, many oracles can be efficiently \emph{simulated}. We say that a distribution $\Os$ over oracles $O$ is efficiently simulatable if for any polynomial $q$ and any (potentially inefficient) algorithm $A$ making $q$ queries, there exists a quantum polynomial-time stateful simulator $S$ such that $|\Pr_{O\gets\Os}[A^O()=1]-\Pr[A^S()=1]|$ is negligible. 

Many oracles used for oracle separations in cryptography are simulatable. For example, random oracles and permutations are simulatable~\cite{C:Zhandry12,Zhandry25}. In fact, many oracles used in cryptography are built from random oracles/permutations plus other efficient computation, meaning such oracles are efficiently simulatable.

\begin{remark}Very recently and in completely unrelated contexts, several works~\cite{EC:LinMooWic25,TCC:DotMulSre25,TCC:GarGunWan25} show separations of certain primitives against all possible ``crypto'' oracles. These works similarly provide separations relative to extremely large families of functions. Note that their notion of crypto oracles are simulatable (since these oracles are efficient functions querying an internal random oracle), and both crypto and simulatable oracles are strict subsets of all possible oracles. Thus, our family of oracles is even broader than theirs.
\end{remark}

\subsection{Quantum Money Relative To Oracles, with Classical-Query Verifiers}

Here, we define public key quantum money relative to an oracle. For simplicity of notation, we implicitly have all procedures be functions of the security parameter, but do not explicitly denote the security parameter. The definition will largely follow that of~\cite{AC:AnaHuYue23}, but since it is different in important ways from the standard definition of quantum money, we first provide a discussion motivating the definition.

\paragraph{Oracles.} We will consider schemes defined relative to an oracle $O$ drawn from a distribution $\mathcal{O}$. All parties, including both the algorithms of the quantum money scheme and the adversary will be able to make queries to $O$.

\paragraph{Mini-schemes.} For simplicity, we will only consider a quantum money ``mini-scheme'', which roughly can be seen as a version of quantum money where the mint only ever creates a single banknote, and the adversary sees that banknote and tries to create two banknotes. Mini-schemes simplify the discussion, and in particular an impossibility for a mini-scheme implies an impossibility for a full scheme. Thus, for our purposes, considering mini-schemes only makes our results stronger. In the other direction, mini-schemes can be lifted to full schemes using digital signatures~\cite{STOC:AarChr12}, which can in turn be built from any one-way function. While in general the exact relationship between one-way functions and quantum money is unknown, we note that we expect most cryptographically-useful classical oracles to give one-way functions. Since we will be considering schemes which utilize such oracles, mini-schemes should therefore be considered essentially equivalent to full schemes in the setting we consider.

\paragraph{Classical query verifiers and correctness.} In the typical definition of quantum money, the scheme is considered correct as long as valid banknotes produced by the mint pass verification. Concretely, the definition only requires that the first verification of the banknote passes. But one may be worried that verifying a banknote actually destroys it, which would be a rather useless quantum money scheme. Fortunately, in the usual setting of quantum money, the standard one-time correctness notion actually implies that the banknote is preserved under verification, and in particular that subsequent verifications will pass. This is because correctness requires passing verification with overwhelming probability, and the Gentle Measurement Lemma~\cite{Winter99,Aaronson04} shows that such measurements can be performed in a way that negligibly affects the state.

In our setting, however, we cannot apply Gentle Measurement to get many-time correctness. This is because we are requiring the verifier to make only classical queries to the oracle, which requires measuring the quantum money state to get the query. Applying Gentle Measurements would turn this into a coherent process, meaning the verifier no longer makes only classical queries.

As such, following~\cite{AC:AnaHuYue23}, we simply stipulate in the correctness definition that the classical-query verifier will keep accepting the state for an arbitrary polynomial number of times. What this means is that, even though verification may alter the state, the perturbed banknote will still pass verification.

\begin{remark}Multi-time correctness truly is the ``right'' notion of correctness for quantum money, since we want to be able to verify banknotes many times. The only reason it is not standard in the literature is because for arbitrary verifiers it is possible to get away with the simpler one-time definition. 
\end{remark}

\begin{remark}Multi-time correctness is essential to our results as, for example, the verification of~\cite{STOC:AarChr12} can be modified to only make classical queries. Thus, there is an oracle relative to which there exists one-time-correct quantum money with a classical-query verifier. The problem is that the post-verification state under such a verifier will not pass a second verification. If one applies Gentle Measurements to this modified verifier, the resulting verifier is back to making quantum queries. 

Many-time correctness is also essential to the techniques of~\cite{AC:AnaHuYue23}. Here, however, note that it may be the case that even one-time correct quantum money is impossible from collision resistance\footnote{This is, after all, what~\cite{AC:AnaHuYue23} set out to prove.}. 
\end{remark}

We now give the definition.

\begin{definition}[Quantum Money, syntax] An oracle-aided public key quantum money mini-scheme relative is a pair of QPT oracle algorithms $\Pi^O=(\gen^O,\ver^O)$ where:
\begin{itemize}
    \item $\gen^O()$ samples a public key $\pk$ and a quantum state $\money$.
    \item $\ver^O(\pk,\money)$ takes as input $\pk,\money$, and outputs a bit $b$ together with a post-verification state $\money'$.
\end{itemize}
\end{definition}
We say that $\Pi^O$ has a classical query verifier (resp. mint) if $\ver$ (resp. $\gen$) only makes classical queries to $O$.

\begin{definition}[Quantum Money, Reusable Correctness] For a distribution $\Os$, an oracle-aided public key quantum money scheme $\Pi^O$ is \emph{reusably-correct} relative to $\Os$ if the following is true. For a polynomial $t$, consider the experiment $O\gets\Os$, $(\pk,\money_1)\gets\gen^O()$, and then $(b_i,\money_{i+1})\gets\ver^O(\pk,\money_{i})$ for $i=1,\cdots, t$. Then we require that, for all polynomials $t$, there exists a negligible $\epsilon$ such that $\Pr[b_1=b_2=\cdots=b_{t}=1]\geq 1-\epsilon$.
\end{definition}

\begin{definition}[Quantum Money, Oracle Security] For a distribution $\Os$, an oracle-aided public key quantum money scheme $\Pi^O$ is \emph{query-secure} relative to $\Os$ if, for all quantum algorithms $A$ making a polynomial number of classical queries to $O$, there exists a negligible function $\epsilon$ such that:
\[
  \Pr[
    \ver(\pk,\money_1)=\ver(\pk,\money_2)=1,\;
    O\gets\Os,\ (\pk,\money)\gets\gen^{\Os}(1^\lambda),\ \money_{1,2}\gets A^{\Os}(\pk,\money)
  ]\leq \epsilon(\lambda)\; ,
\]
where $\money_{1,2}$ is a joint system over two possibly entangled quantum money states $\money_1,\money_2$. 

$\Pi^O$ is \emph{computationally-secure} relative to $\Os$ if the above only holds for $A$ whose overall computation time is polynomial.
\end{definition}

\subsection{The impossibility}

\begin{theorem}For any distribution of oracles $\Os$, there is no public key quantum money scheme $\Pi^O$ such that:
\begin{itemize}
    \item $\Pi^O$ has a classical query verifier.
    \item $\Pi^O$ is reusably-correct
    \item $\Pi^O$ is query-secure.
\end{itemize}
Additionally, the last bullet can even be relaxed to computationally-secure, under the assumption that (1) one-way puzzles do \emph{not} exist and (2) $\Os$ is efficiently simulatable.
\end{theorem}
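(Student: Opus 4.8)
The plan is to recognize a classical-query-verified quantum money scheme as a special case of the impersonation game of \Cref{thm:main}, with Alice playing the public verifier, Alice's state being the banknote, and Bob playing the bank together with the oracle, so that an impersonator who forges a banknote from the query transcript alone is precisely a cloning attack. Concretely, assuming for contradiction that $\Pi^O=(\gen^O,\ver^O)$ has a classical-query verifier, is reusably-correct and is query-secure relative to $\Os$, I would fix a polynomial $t$ bounding the total number of classical messages exchanged in one verification (queries, answers, and one appended bit), and define the following protocol $\Pi$: sample $O\gets\Os$, run $(\pk,\money)\gets\gen^O()$, give the banknote register and a classical copy of $\pk$ to Alice and give $O$, a purifying environment, and a copy of $\pk$ to Bob; in each round Alice runs one execution of $\ver^O(\pk,\cdot)$ on her current banknote register, sending each classical oracle query to Bob and receiving $O$'s answer, and at the end of the round she appends her accept/reject bit to the transcript as one extra message (this last step is exactly the ``append the optimal distinguisher's output'' device noted right after \Cref{thm:main}). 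Alice's register has $n:=$ (number of qubits of the banknote) qubits, a fixed polynomial.

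Next I would invoke \Cref{thm:main} on $\Pi$ with, say, $\epsilon=\tfrac12$, obtaining an impersonator Eve who picks $k\gets[K]$ with $K=O(nt)$, watches $k$ honest rounds, and from the classical transcript alone produces a fake banknote register $\money'$ such that the joint distribution of $k$ and the transcript of round $k+1$ --- crucially including Alice's accept bit --- is $\epsilon$-close between the honest run and the run in which $\money'$ is swapped in for Alice's register. In the honest run, round $k+1$'s bit is the outcome of the $(k{+}1)$-st consecutive verification of the genuine banknote, so reusable correctness forces it to equal $1$ except with negligible probability; therefore in the swapped run $\ver(\pk,\money')$ outputs $1$ with probability at least $1-\epsilon-\mathrm{negl}$.

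From Eve I would build a cloning adversary $A$: given $(\pk,\money)$ and query access to $O$, pick $k\gets[K]$, run $\ver$ on $\money$ for $k$ times while forwarding its classical queries to $O$, thereby holding the post-verification banknote $\money_{k+1}$ and recording the classical transcript $T$; then run Eve's forging step on $T$ to obtain $\money'$. Now $\money_{k+1}$ is a genuine banknote after $k$ verifications, so $\ver(\pk,\money_{k+1})=1$ with all but negligible probability by reusable correctness, while $\ver(\pk,\money')=1$ with probability at least $1-\epsilon-\mathrm{negl}$; since the two verifications act on disjoint registers, the output $(\money_{k+1},\money')$ survives the double check with probability at least $\tfrac12-\mathrm{negl}$, contradicting query-security. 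Here $A$ makes only $O(Kt)=\mathrm{poly}$ classical queries, and its one possibly-inefficient step --- preparing Eve's forged state $\rho_{kt}\mid T_{kt}$, a state determined by the public description of $\Os$ and by $T_{kt}$ --- is harmless since query-security allows $A$ unbounded time.

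For the computational strengthening, I would additionally assume $\Os$ is efficiently simulatable and that one-way puzzles do not exist, and replace Bob's oracle in $\Pi$ by the efficient stateful simulator $S$ guaranteed for the composite, polynomial-query experiment consisting of all $K$ verification rounds followed by Eve's extra round; since every party makes only polynomially many classical queries, this perturbs all quantities above negligibly, and now Alice ($=\ver$, QPT) and Bob ($=S$) are both efficient, so \Cref{thm:main} yields an efficient Eve, and since Eve's forging step is literally a classical$\rightarrow$quantum extrapolation problem, the non-existence of one-way puzzles (via \Cref{import-thm:owpuzzle-extrapolation}) makes even that step QPT; plugging this efficient Eve into the adversary $A$ above, run with the real oracle for the verification rounds, gives a QPT cloning adversary and contradicts computational-security. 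I expect the hard part to be this last paragraph: one must argue carefully that $S$ faithfully stands in for $O$ across an experiment that already contains Eve's (efficient, hence polynomially-query-bounded) forging procedure and a further verification, and that the extrapolation instance Eve must solve is exactly the kind to which \Cref{import-thm:owpuzzle-extrapolation} applies; by contrast the information-theoretic half is comparatively routine once the bookkeeping --- threading $\ver$'s classical queries through the channel and exposing the accept bit in the transcript --- is set up.
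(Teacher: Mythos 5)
Your proposal is correct and follows essentially the same route as the paper's proof: cast the classical-query verifier as Alice holding the banknote, the oracle/bank as Bob, forward the queries and the accept bit over the channel, invoke \Cref{thm:main} with constant $\epsilon$, and turn Eve's forgery plus the retained post-verification banknote into a cloning adversary defeated only by query-security, with the computational case handled via oracle simulatability and the extrapolation/one-way-puzzle equivalence. The only differences are cosmetic (your constants give success $\tfrac12-\mathrm{negl}$ versus the paper's $\tfrac14$, and you spell out the simulator bookkeeping the paper leaves terse).
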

\begin{proof}We first describe a protocol between Alice and Bob. To initialize their joint states, sample $O\gets\Os$, and let $(\pk,\money)\gets\gen^O()$. Give $\money$ to Alice, and $\pk,O$ to Bob.

Now Alice and Bob interact as follows. Alice lets $\money_1=\money$. Then for $i=1,2\cdots$, Alice does the following:
\begin{itemize}
    \item Alice runs $(b_i,\money_{i+1})\gets\ver(\pk,\money_i)$, except that each time $\ver$ makes a (classical) oracle query $x_j$ to $O$, Alice sends $x_j$ to Bob. Bob, who knows $O$, computes $y_j=O(x_j)$ and sends $y_j$ to Alice.
    \item Alice sends the bit $b_i$ to Bob.
\end{itemize}
Notice that by the reusable correctness of the protocol, all of the bits $b_i$ that Alice sends to Bob will be 1 with overwhelming probability.

Let $\rho_i$ be Alice's state prior to the $i$th round of interaction. Notice that $\rho_i$ contains $\money_i$, but also any information produced during $\ver$ such as the queries and responses from $O$.

Now we use the impersonation attacker Eve. In general, Eve will be inefficient.
But if $\Os$ is simulatable, then Alice's view can be efficiently simulated.
If additionally one-way puzzles do not exist, then Eve will even be efficient: this is because the only potentially inefficient task Eve needs to solve is the extrapolation task from the classical transcript to Alice's internal state.

After some number $k$ iterations, Eve will be able to construct a quantum state $\rho_{k+1}'$. Part of $\rho'_{k+1}$ will be a simulated money state $\money_{k+1}'$, but $\rho'_{k+1}$ may contain other information as well. Now replace $\rho_{k+1}$ with $\rho'_{k+1}$. The guarantee from Eve is that, with probability at least (say) 1/2, the next round of interaction between Alice and Bob (but using Eve's $\rho'_{k+1}$) will result in $b_{k+1}=1$. In particular, since Alice interacting with Bob is simply running $\ver$, we have that $\Pr[\ver^O(\pk,\money_{k+1}')=1]\geq 1/2$.

From Alice and Eve, we can now describe our adversary $A$:
\begin{itemize}
    \item On input $(\pk,\money=\money_1)$, $A$ repeatedly runs $(b_i,\money_{i+1})\gets\ver^O(\pk,\money_i)$. Equivalently, $A$ runs Alice, but where messages to Bob are instead sent as queries to $O$.
    \item $A$ also runs Eve. At a step $k$ chosen by Eve, Eve will produce a quantum state $\rho'_{k+1}$, which in particular contains a money state $\money_{k+1}'$. 
    \item $A$ outputs $\money_{k+1},\money_{k+1}'$.
\end{itemize}
By correctness, $\Pr[\ver^O(\pk,\money_{k+1})=1]\geq 3/4$, and by the guarantees of Eve, $\Pr[\ver^O(\pk,\money'_{k+1})=1]\geq 1/2$. Thus, $\Pr[\ver^O(\pk,\money_{k+1})=\ver^O(\pk,\money'_{k+1})=1]\geq 1/4$, violating the security of $\Pi$.\end{proof}

\subsection{Interpretation}\label{sec:interpretation}

A major goal in cryptography is to argue that some cryptographic building block $P$ cannot be used to build another primitive $Q$. In general, we may believe both $P$ and $Q$ exist, so a trivial way to ``build'' $Q$ from $P$ is to have $Q$ simply ignore $P$, and use the assumed instantiation of $Q$. In order to argue that $Q$ cannot be built from $P$, we need to somehow restrict to instantiations of $Q$ that, in some sense, actually use $P$.

Following Impagliazzo and Rudich~\cite{STOC:ImpRud89}, the standard approach in cryptography is to use an oracle separation. Here, one gives an oracle $O$ relative to which $P$ provably exists, but $Q$ does not. Such oracle separations show that any construction technique that relativizes --- namely those that just query $P$ as a black-box --- and works relative to oracles cannot be used to build $Q$ from $P$. The vast majority of techniques are black-box, and so an oracle / black-box separation of this form rules out ``typical'' or ``natural'' techniques for building $Q$ from $P$. There are two typical formats of such separations, both sufficient for this purpose:
\begin{itemize}
    \item Give all algorithms free computation outside of the oracle, but define the notion of ``efficient'' algorithm to be one that only makes a polynomial number of queries to the oracle.
    \item Keep the notion of ``efficient'' as having polynomial overall computation time, but then provide an additional oracle which breaks some complexity class such as $\mathsf{NP}$ or $\mathsf{PSPACE}$. This breaks any realization of $Q$ that does not make queries to the oracle for $P$. As such, it forces $Q$ to ``use'' $P$.
\end{itemize}
Both formats are sufficient to show a black-box separations, but the latter is more fine-grained.

\paragraph{Comparison to~\cite{AC:AnaHuYue23}.} Our result significantly expands \cite{AC:AnaHuYue23} from random oracles --- which capture symmetric primitives such as one-way functions or collision resistance --- to all oracles, capturing essentially any cryptographic building block. When restricting to random oracles, our result also marginally improves on~\cite{AC:AnaHuYue23}. Namely,~\cite{AC:AnaHuYue23} follow the latter format, showing that there is no public key quantum money with classical verifier relative to a random oracle plus a $\mathsf{PSPACE}$ oracle. Since random oracles are efficiently simulate, our result shows that actually a random oracle plus an oracle breaking (oracle-free) one-way puzzles is sufficient. Such a one-way puzzle breaker can be achieved via $\mathsf{PP}$ and thus is potentially weaker than a $\mathsf{PSPACE}$ breaker.

\paragraph{Beyond black-box techniques.} Black-box techniques are those that do not relativize. These work by using the underlying circuit description of $P$ in building $Q$. Non-black-box techniques have been successfully used to overcome black-box impossibilities. A famous example is building identity-based encryption from cryptographic groups, which was proved impossible for black-box techniques in~\cite{EPRINT:PapRacVah12}, but was nevertheless shown possible with non-black-box techniques in~\cite{C:DotGar17}. 

A major limitation of most oracle separations is therefore that they cannot reason about non-black-box techniques, and therefore may falsely indicate a true impossibility. Despite this, oracle separations are nevertheless important for demonstrating large classes of techniques that fail. This may indicate an actual impossibility, or if not, it at least guides future work towards the techniques that can overcome the impossibility. 

It turns out, however, that many common non-black-box techniques can even be captured by appropriate oracles. This was first observed by Asharov and Segev~\cite{FOCS:AshSeg15} in the context of indistinguishability obfuscation (iO), and explored more generally by Zhandry~\cite{C:Zhandry22a}. Even though non-black-box techniques may make use of the underlying circuit representation, the technique is typically abstracted into a primitive that operates on circuits, such as zero knowledge proofs~\cite{STOC:GolMicRac85}, garbled circuits~\cite{FOCS:Yao86}, or indistinguishability obfuscation~\cite{C:BGIRSV01}. The central observation in these works is that the abstraction actually can be captured by an appropriate choice of oracle.

For example, consider the case of zero knowledge proofs, where one proves an $\mathsf{NP}$ statement without revealing anything about the underlying witness. A typical non-black-box technique in cryptography is to compute a zero knowledge proof about some statement involving a cryptographic primitive $P$. Zero knowledge proofs are non-black-box, in the sense that they work by transforming the $\mathsf{NP}$ statement involving $P$ into a circuit, and then operate on that circuit. However, this setting can be turned into an oracle as follows. The oracle has two components. The first component implements the primitive $P$, and the second component implements the zero knowledge proof. Importantly, to capture the non-black-box techniques, we need the zero-knowledge proof to operate on statements involving $P$. Real-world zero knowledge cannot accomplish this when $P$ is provided as an oracle. But in the relativized world, we can allow the zero knowledge proof to operate on oracle-aided statements that themselves make queries to $P$. Defining such oracles and proving that the do in fact realize the desired primitives is straightforward.

\paragraph{Interpreting our result.} Our oracle separation result holds relative to \emph{any} oracle. As such, it would hold relative to the $P$-plus-zero-knowledge oracles described above, or oracles implementing garbled circuits or indistinguishability obfuscation. In fact, any even non-black-box technique that can be modeled by an oracle taking as input oracle-aided circuits would also be handled by our impossibility. This captures the vast majority of techniques used in cryptography. This gives a much stronger separation than typical oracle separations, which are usually limited to a single oracle, and that oracle usually does not capture non-black-box techniques.

\section*{Acknowledgements}

During the preparation of this work, the authors used LLMs to generate initial draft text for certain sections.
The authors reviewed, revised, and take full responsibility for all content, ensuring its correctness and integrity.

\printbibliography

\end{document}